\title{Improving frequency response with synthetic damping available from fleets of distributed energy resources}
\author{Hani Mavalizadeh
    $\;\;\;$ Luis A. Duffaut Espinosa 
    $\;\;\;$ Mads R. Almassalkhi 
   \thanks{The authors are with the Department of Electrical and Biomedical Engineering, University of Vermont, Burlington, VT 05405 USA (Email: \{hani.mavalizadeh, lduffaut,malmassa\}@uvm.edu). This work was supported by the U.S. Department of Energy's ARPA-E award DE-AR0000694. M. R. Almassalkhi is co-founder of startup company Packetized Energy, which has commercialized aspects related to PEM.}
    
}
\newtheorem{theorem}{Theorem}
\newtheorem*{remark}{Remark}
\newtheorem{definition}{Definition}
\theoremstyle{definition}
\begin{document}
    \bstctlcite{IEEEexample:BSTcontrol}
	\maketitle
\begin{abstract}
With the increasing use of renewable generation in power systems, responsive resources will be necessary to support primary frequency control in future low-inertia/under-damped power systems. Flexible loads can provide fast-frequency response services if coordinated effectively. However, practical implementations of such synthetic damping services require both effective local sensing and control at the device level and an ability to accurately estimate online and predict the available synthetic damping from a fleet. In addition, the inherent trade-off between a fleet being available for fast frequency response while providing other ancillary services needs to be characterized. In this context, the manuscript presents a novel, fully decentralized, packet-based controller for diverse flexible loads that dynamically prioritizes and interrupts loads to engender synthetic damping suitable for primary frequency control. Moreover, the packet-based control methodology is shown to accurately characterize the available synthetic damping in real-time, which is useful to aggregators and system operators.  Furthermore, spectral analysis of historical frequency regulation data is used to produce a probabilistic bound on the expected available synthetic damping for primary frequency control from a fleet and the trade-off from concurrently providing secondary frequency control services. Finally, numerical simulation on IEEE test networks demonstrates the effectiveness of the proposed methodology.
	\end{abstract}
	
	\begin{IEEEkeywords}
	    		Decentralized control, distributed energy resources, fast frequency response, packet-based energy management, frequency responsive loads, synthetic damping.
	\end{IEEEkeywords}
	
\renewcommand\nomgroup[1]{%
  \item[\bfseries
  \ifstrequal{#1}{S}{Sets}{%
  \ifstrequal{#1}{I}{Indices}{%
  \ifstrequal{#1}{V}{Variables}{%
  \ifstrequal{#1}{P}{Parameters}{}}}}%
]}

\mbox{}

\nomenclature[S]{\(G\)}{Graph representing the topology of a grid}
\nomenclature[S]{\(\mathcal{V}\)}{Set of system busses}
\nomenclature[S]{\(\mathcal{E}\)}{Set of transmission lines}
\nomenclature[I]{\(i,j\)}{index for busses}
\nomenclature[P]{\(m_j\)}{Inertia constant of bus $j$}
\nomenclature[P]{\(D_j\)}{Damping coefficient in bus $j$ (MW/Hz)}
\nomenclature[P]{\(\Delta P^{G}_j\)}{Change in generation in bus j (MW)}
\nomenclature[P]{\(\Delta P^{L}_j\)}{Change in uncontrollable load in bus j (MW)}
\nomenclature[V]{\(\Delta P^{DER}_j\)}{Change in DER aggregate load in bus j (MW)}
\nomenclature[V]{\(\Delta P_{i,j}\)}{Power transfer between bus i and j (MW)}
\nomenclature[I]{\(k\)}{Index for time steps}
\nomenclature[V]{\(t_n[k]\)}{Local timer for $n^{th}$ DER at time k (sec)}
\nomenclature[I]{\(n\)}{Index for DERs}
\nomenclature[I]{\(h\)}{Index for harmonics}
\nomenclature[P]{\(\delta\)}{Packet length (sec)}
\nomenclature[P]{\(n_p\)}{Number of timer bins}
\nomenclature[V]{\(C_n\)}{A binary variable which is 1 if DER is ON, and 0 otherwise}
\nomenclature[P]{\(M\)}{A constant matrix used to model the timer dynamics}
\nomenclature[P]{\(B\)}{Vector allocating accepted DERs to timer's first bin}
\nomenclature[V]{\(q^{\text{ch}}\)}{Number of accepted charging requests}
\nomenclature[V]{\(x[k]\)}{The vector of number of DERs in each bin}
\nomenclature[P]{\(P^{\text{cap}}\)}{Power capacity of each DER (kW)}
\nomenclature[P]{\(T_D\)}{Time delay for the designed low pass filter (sec)}
\nomenclature[V]{\(\eta\)}{Calculated threshold}
\nomenclature[P]{\(f_{\text{db}}, f_{\text{max}}\)}{Controller parameters (Hz)}
\nomenclature[P]{\(K_D\)}{Proportional coefficient for PD controller}
\nomenclature[P]{\(\Delta t\)}{Simulation time step (sec)}
\nomenclature[V]{\(\Delta f_{\text{nadir}}\)}{The maximum frequency deviation}
\nomenclature[V]{\(f_d^{\text{max}}\)}{The maximum rate of change or frequency during the contingency (Hz/sec)}
\nomenclature[P]{\(\eta_{\text{min}}\)}{ Portion of the fleet can be used for FFR}
\nomenclature[P]{\(P^{\text{nom}}\)}{Nominal power (MW)}
\nomenclature[P]{\(f_0\)}{Fundamental frequency for AGC signal (Hz)}
\nomenclature[V]{\(A\)}{Amplitude of AGC (MW)}
\nomenclature[V]{\(X^{0}\)}{Accepted requests when $c_0$ is tracked}
\nomenclature[V]{\(X^{h*}\)}{Accepted requests when $c_0+H_h$ is tracked}
\nomenclature[V]{\(X^{h}\)}{Accepted requests when $H_h$ is tracked }
\nomenclature[V]{\(q^{\text{dis}}\)}{Number of accepted discharging requests}
\nomenclature[V]{\(n_{u}\)}{Average number of accepted requests}
\nomenclature[P]{\(F\)}{Safety factor}
\nomenclature[V]{\(\rho\)}{Probability of violating safety factor}
\nomenclature[V]{\(n_{\text{ess}}\)}{Total number of batteries}
\nomenclature[V]{\(n_{\text{tcl}}\)}{Total number of TCLs}


\section{Introduction}\label{sec:intro}

The penetration of renewable energy sources (RES) such as wind and solar is increasing rapidly as a part of the global effort to reduce greenhouse gas emissions. However, increased RES use leads to more variability in electricity generation due to RES' intrinsic uncertainty. In addition, replacing conventional synchronous generators with inverter-based renewable generation reduces the inertia, i.e., the power system's ability to oppose changes in frequency~\cite{ercot2018}. 
A significant deviation from nominal frequency can lead to an outage of generation units which subsequently causes further frequency deviation and, in severe cases, can result in a total system blackout. A recent example was the Texas blackout in 2021, which led to approximately 155 billion dollars in loss \cite{BUSBY2021erss}. Therefore corrective actions are crucial, to control the power system frequency. Frequency control mechanisms include primary, secondary, and tertiary frequency control.  Primary frequency control also called fast frequency response is largely automatic and instantaneous and occurs over the first few seconds following a grid disturbance event. The secondary frequency control brings the frequency back to the nominal value by adjusting the output of generating units within a few minutes after a frequency event. Tertiary frequency control restores the power reserve of the generators used for the secondary frequency control~\cite{ZHANG2020ER}. In this paper, the focus is on primary control since secondary and tertiary control do not significantly influence the transient frequency dynamics.

Distributed energy resources (DER) are widely considered an effective and scalable way to provide primary frequency control~\cite{guo2018cdc, poolla2019TPWRS}. DER coordination can be used to provide synthetic damping, which is defined as the percentage change in the total DER consumption in response to frequency change~\cite{ercot2018}, as well as inertia, improving the stability of the power system. Coordinating DERs to provide primary frequency control has been studied for many years~\cite{schweppe1980PAS,brokish2009,zhao2018,Mendieta2021}. One of the first works on frequency responsive loads was presented in 1980 called frequency adaptive power and energy re-scheduler (FAPER)~\cite{schweppe1980PAS}. In this method, the dynamic state of thermostatically controlled loads (TCLs) is used to prioritize TCLs for frequency response. For example, devices with high temperatures will be prioritized to be turned off during an under-frequency event. In addition, the bound on temperature are frequency-dependent meaning that for higher frequency deviations, more devices participate in the frequency response.  Probabilistic FAPER was introduced in~\cite{brokish2009} by injecting random delays in devices switching on/off, which helped address synchronization concerns with FAPER, i.e., avoided large groups of DERs attaining nearly the same temperature and, thus, responding nearly identically and causing large power swings.


\par
Different types of DERs can be coordinated for primary frequency control. TCLs (e.g., electrical water heaters (EWHs) and refrigerators) can be turned off for short periods without a considerable effect on the temperature, which provides some flexibility used for a  rapid change in load. Also, TCLs form a large portion of the power system load~\cite{zhao2018}, and therefore, coordinating them provides considerable capacity for the power system operator. Another significant advantage of using TCLs is that they are highly responsive, making them an appropriate option for primary frequency control where fast response is required. In~\cite{Clarke2020}, smart EWHs are used to compensate for the uncertainty in wind and solar energy, peak shifting, and frequency response. A dynamic model is presented in~\cite{Mendieta2021} for different types of TCLs that adapt and improve a direct load control (DLC) scheme for primary frequency regulation in hybrid isolated microgrids. 
Frequent on/off switching of TCLs increases wear-and-tear and should be minimized during TCL coordination schemes as discussed in~\cite{coffman2020ACC}.

 Different control architectures have been proposed for DER coordination, which differs in the level of communication requirements, quality of service (QoS), number of cycling, and level of grid awareness. In~\cite{Syed2018}, a novel method is proposed that relies on transient phase offset to achieve a fast response to primary frequency control. The method enables the power system operators to use resources closer to the frequency deviation source. Such a technique also ensures that the stability of the system is preserved. In~\cite{Weitenberg2019}, a fully decentralized leaky integral controller for frequency restoration is presented. The use of decentralized control led to the elimination of communication delays and failures (e.g., lost messages) associated with centralized control schemes. Instead,~\cite{Weitenberg2019} uses communication between local loads, which decreases the communication structure costs significantly. Other methodologies employ adaptive controllers that adapt to online measurements. One example of such techniques was introduced in~\cite{Jin2019}, where an adaptive control framework is built based on the time-space distribution characteristics of the frequency in the power system. Also, the frequency response control is transformed from decentralized feedback control to centralized feed-forward control. Moreover, adaptive controllers have been shown to reduce problems with actuation delay. Finally, hierarchical optimization-based DER coordination schemes were developed in~\cite{almassalkhi2020energies} with the advantage that AC network constraints can be considered. This is the so-called {\em grid-aware coordination}. 
 
 The authors have presented a fully decentralized proportional controller in~\cite{mavalizadeh2020SGC} to provide synthetic damping from a TCL fleet. In this manuscript, the results from~\cite{mavalizadeh2020SGC} are extended to arbitrary reference signals. While the proposed control scheme is tested on a timer-based prioritization scheme~\cite{almassalkhi2018chapter}, it can be applied to other {\em fitness-based} DER prioritization schemes (e.g.,~\cite{nandanoori2018CTA}) as well. Each DER measures frequency locally and based on the designed control law, a frequency-dependent threshold on the timer is calculated. Based on the calculated threshold, devices determine whether to participate in primary frequency control or not. To the best of the author's knowledge, this is the first work to provide an analytical estimate of the amount of synthetic damping that can be extracted from a DER fleet. The main contributions of this manuscript are listed below:
\begin{itemize}
    
    \item The decentralized energy packet interruption controller initially proposed in~\cite{mavalizadeh2020SGC} is now generalized to consider bi-directional DERs and rate-of-change of frequency (RoCoF) to improve the fleet's response during either under- or over-frequency events. 
    \item Using limited information available to the coordinator, the synthetic damping available from a fleet of packetized resources can be precisely estimated in real-time. 
    \item Spectral analysis of historical AGC data is used to develop and compute a probabilistic lower bound on the expected synthetic damping available from a fleet. This lower bound can be used to analyze the trade-off between a fleet's expected primary and secondary frequency control capabilities. 
    \item Simulation-based analysis is provided on practical considerations for packet-based DER coordination and synthetic damping, such as local measurement resolution and sensing/controller delays.
\end{itemize}

The remainder of the paper is organized as follows: a brief description of the dynamic grid model and packet-based coordination~is provided in section~\ref{sec:prelim}. The proposed control law is described in section~\ref{sec:controller}. In section~\ref{sec:tradeoff} the trade-off between FFR services and other ancillary services is characterized. Practical considerations are discussed in section~\ref{sec:practical}. Finally, section~\ref{sec:conclusion} provides the conclusion.

\section{ Preliminaries}\label{sec:prelim}
In this section, the network model is presented and the concepts of packet coordination and packet interruption are provided.

\subsection{Network Model}
     Let $G=(\mathcal{V},\mathcal{E}$) be a graph representing the topology of a transmission network, where $\mathcal{V}:=\{1,...,N_n \}$, is the set of $N_n$ nodes and $\mathcal{E} \subseteq \mathcal{V}\times \mathcal{V}$ is the set of branches, such that if $i$ and $j$ are connected, then $(i,j) \in \mathcal{E}$. The frequency dynamics of the network are  governed by the swing equations~\cite{kundur}:
\begin{subequations} 
    \begin{small}
	   \begin{align}
         \Delta\dot{\theta}_j = & {\,} \Delta\omega_j,\\
        M_{j}\Delta\dot{\omega}_{j} = & {\,} \Delta P^{\text{G}}_{j}-\Delta P^{\text{L}}_{j}-\Delta P^{\text{DER}}_{j} -D_{j}	\Delta \omega_{j} + \sum_{i:(i,j) \in \mathcal{E}}^{N} \Delta P_{ij},  \label{eq:swingDyn2}
	   \end{align}
     \end{small}
\end{subequations}
where $\theta_j$ and $\omega_j$ are the voltage angle and angular velocity at bus $j$, respectively, and $\Delta P^{\text{G}}_{j}$, $\Delta P^{\text{L}}_{j}$, and $\Delta P^{\text{DER}}_{j}$ are deviations in the generation, uncontrollable load, and controlled DER from nominal at bus $j$, respectively. Inertia [sec] and damping [MW/Hz] are described by $M_j$ and $D_j$, respectively, while $P_{ij}$ denotes the power flow [MW] between areas $i$ and $j$, respectively. 
    The $j^{\text{th}}$ generator's turbine dynamics is modeled in~\eqref{eq:turbine},
	\begin{align}\label{eq:turbine}
	\tau_j \Delta \dot{P}^{\text{G}}_{j}= -\Delta P^{\text{G}}_{j}-\frac{\Delta \omega_{j}}{R_{j}}, 
	\end{align}
	where $R_{j}$ is the generator's governor droop coefficient [Hz/MW] and $\tau_j$ is the turbine time constant [sec]. Synchronous generator droop controllers usually have a deadband of 36 mHz~\cite{kirby2003ornl}. To model the DERs response, the basics of packet-based coordination are presented next.  
	
	\subsection{Packet-based DER coordination}\label{subsec:pem}
 	
     Packet-based DER coordination is enabled by a cyber-physical system that coordinates incoming and asynchronous discrete grid-access requests for energy from individual DERs~\cite{almassalkhi2018chapter, duffaut2020PES, brahma2022optimal, Almassalkhi2022spectrum}. The DER's asynchronous requests are central to packet-based coordination and are explained next.
    \begin{definition}{(Energy packet)}
    An energy packet is a fixed-duration and fixed-power epoch of energy consumed (or delivered) by a DER. 
    \end{definition}

 In packet-based schemes,  each DER requests an energy packet based on its need for energy (NFE). DERs considered in this manuscript represent residential water heaters, EV chargers and/or residential batteries, whose energy (e.g., temperature and SoC) dynamics are much slower than the frequency response. This means that during the primary frequency control period, the specific model of the DERs is not significant, as long as their power consumption is adjustable. For example, if an air conditioner (A/C) measures a room temperature in the summer above some desired set-point, then the room temperature is too high and the device's NFE increases. This leads to more frequent requests for energy packets to cool down the room. Similarly, if the A/C measures a low room temperature, the NFE decreases and the device will not request an energy packet often, if at all. The energy packet requests then arrive from devices asynchronously and each request is either accepted or rejected by the coordinator based on aggregate demand and a market or grid reference signal. When a request for an energy packet is accepted, an internal timer for the switched device is triggered and the DER charges or discharges until the timer's absolute value equals the packet length (or epoch length).  The local timer for DER $n$ is described as
\begin{align} \label{eq:timer}
	 t_{n}[k+1]= 
	        \left\{
        	 \begin{array}{ll} 
        	    t_{n}[k]+\Delta t,   & \text{if } C_n[k]= 1\\
        	    t_{n}[k]-\Delta t,   & \text{if } C_n[k]= -1\\
        	    0                    & \text{otherwise}  
        	\end{array}    
           \right.,
\end{align}
	where $\Delta t$ is the sampling time, the number of bins is $n_p :=\lfloor \delta/\Delta t_B \rfloor$ and $\Delta t_B$ is the timer bin width. The packet duration (epoch) is denoted $\delta$ and typically is set between 60 to 600 seconds. Without loss of generality, one can choose $\Delta t= \Delta t_B$. When the $n^{\text{th}}$ DER has its charging request accepted at time $k$, then $C_n[k+1]=C_n[k+2]=\dots C_n[k+n_p]=1$. On the other hand, if a discharging request packet is accepted at time $k$, $C_n[k+1]=C_n[k+2]=\dots C_n[k+n_p]=-1$. Otherwise, $C_n[k+1]=0$. In fact, $C_n$ is 1 when the device is charging, -1 when the device discharges, and 0 when the device is OFF. 
 
 Even though the coordinator does not have access to the individual DERs' internal timers, it knows how many requests were accepted at each time step in addition to packet height $P^{\text{rate}}_n$ for each request, which permits the coordinator to construct an accurate estimate of the DER fleet's aggregate timer. In general, the coordinator needs to consider four different timers using~\eqref{eq:timer}; $i)$ Charge-only timer which includes devices that only can be charged, such as TCLs and ACs; $ii)$ Discharge-only timer which includes devices that can only be discharged, such as solar panels and stand-alone gen-sets; $iii)$ Charging bi-directional devices, such as ESS; $iv)$ discharging bi-directional devices.  In this paper, the focus is on DER fleets of TCLs and ESSs, therefore, its three corresponding timers are $(i,iii,iv)$. Since all accepted DERs start their packet at the first bin, the linear timer dynamics is defined by
	 \begin{align}  \label{eq:histogram}
    	\notag x^{\text{ch}}_{
    	\text{tcl}}[k+1] &= Mx^{\text{ch}}_{\text{tcl}}[k]+Bq^{\text{ch}}_{\text{tcl}}[k],\\
    	 x^{\text{ch}}_{\text{ess}}[k+1] &=Mx^{\text{ch}}_{\text{ess}}[k]+Bq^{\text{ch}}_{\text{ess}}[k],\\
     	\notag x^{\text{dis}}_{\text{ess}}[k+1] &= Mx^{\text{dis}}_{\text{ess}}[k]+Bq^{\text{dis}}_{\text{ess}}[k],
        \end{align}
        where $x^{\text{ch}}_{\text{tcl}},x^{\text{ch}}_{\text{ess}}, x_{\text{ess}}^{\text{dis}} \in \mathbb{R}^{n_p}$ are binned distributions of power for charge-only TCLs, bidirectional charging ESS, and bidirectional discharging ESS, respectively, while  $q^{\text{ch}}_{\text{tcl}}[k], q^{\text{ch}}_{\text{ess}}[k], q^{\text{dis}}_{\text{ess}}[k] \in \mathbb{R}$ are the total power of accepted charging TCL, charging ESS, and discharging ESS requests during time step $k$, respectively. That is, $q^{\text{ch}}_{\text{tcl}}[k]=\sum_{n\in I[k]} P^{\text{rate}}_n$, where $I[k]$ is the set of DERs with accepted requests at time $k$. The timer dynamics are defined by $M \in \mathbb{R}^{n_p \times n_p}$, which is a lower triangular matrix with 1's on the lower off-diagonal and zero elsewhere, while $B := [1, 0,\hdots, 0]^\top \in \mathbb{R}^{n_p}$. Thus, when a request is accepted, the accepted DER enters the first bin, and at each time step it propagates through the timer. The number of devices completing their packets at time-step $k+1$ is equal to the number of devices in the last bin of the timer distribution. During a frequency event, the distribution can be considered constant because the timer states evolve slower than the grid frequency. That is, if a frequency event occurs at $k$, states  $x^{\text{ch}}_{\text{tcl}}[k],x^{\text{ch}}_{\text{ess}}[k], x_{\text{ess}}^{\text{dis}}[k]$ can be assumed constant since frequency response is a fast event (i.e., $<10$s).
    
	It is clear that the timer states are a function of past coordinator packet acceptance rates. 
    During the frequency event, packets actively participate in the frequency response based on their internal timer states. Therefore, the concept of packet participation is presented next by extending the packet interruption defined in~\cite{mavalizadeh2020SGC}. 
	\begin{definition}{(Packet participation)}\label{def:participation}
	 The packet's participation is the forced change of DER $n$'s local state $C_n[k]$ before the end of its epoch length (i.e., $t_n < \delta$) due to a frequency deviation event. 
    \end{definition}	
    
   To explain the role of packet participation, first, consider the power draw for a general DER $n$. Its power consumption at time step $k$ is $P_{n}[k] \in [\underline{P}_n,\overline{P}_n]$, where $\underline{P}_n=0$ for TCLs and $= -P_n^\text{cap}<0$ for (discharging) ESS and $\overline{P}_n=P_n^{\text{cap}}>0$ when ON (TCL) or charging (ESS). $P_n^{\text{cap}}$ is the power rating of DER $n$. DER $n$ is then participating in FFR, if, for example, during an under-frequency event, DER $n$ changes its consumption from $\overline{P}_n$ to $\underline{P}_n$. It is important to note that the formulation generalizes to the case when the DER reduces its power to a value larger than $\underline{P}_n$, as long as the coordinator is aware of the DER's available power change. If there is no ESS with $\underline{P}_n < 0$, packet participation refers to packet interruption as defined in~\cite{mavalizadeh2020SGC} for TCLs only. 
  The coordinator continuously monitors three distinct timers in real time. These timers encompass the binned power values of charging TCLs, charging bi-directional DERs, and discharging DERs, as illustrated in~\eqref{eq:histogram}. By adding the values across the timer bins, the total power within each timer is obtained. To determine the total consumption of the entire fleet at time $k$, the total discharging power is subtracted from the total charging power as shown below, 

   \begin{align}\label{eq:pder}
P^\text{DER}[k] := \mathbf{1}_{n_p}^\top (x^\text{ch}_\text{tcl}[k]+x^\text{ch}_\text{ess}[k]) + \mathbf{1}_{n_p}^\top x^\text{dis}_\text{ess}[k].
   \end{align}
        Based on measured grid frequency, the flexible (net) demand, $P^\text{DER}$, can then be actively modified via packet participation by selectively interrupting and/or ``toggling'' packets (e.g., charging at $P_n[k] = P_n^\text{cap} >0$ toggles to discharging at $P_n[k] = -P_n^\text{cap}<0$). The selection of which packets participate during any given frequency event will be based on a fully decentralized DER control law. 
	
	\begin{remark}
	Note that in some packet-based coordination schemes, devices can be interrupted before the completion of their packet to maintain quality of service, i.e. turned off in case of excessive temperature or turned ON in case of low temperature (also called opt-out) as discussed in~\cite{almassalkhi2018chapter}. In~\eqref{eq:histogram} the number of opt-outs is assumed negligible, which is reasonable when the DER fleet operates near nominal demand. The coordinator can further ensure that this assumption is valid by 
 constraining operations to only track power reference signals close to its fleet's nominal power. That is, tracking a signal with a relatively large amplitude can  cause the devices to deviate from the set point, which in turn leads to more opt-outs and interruptions.
	\end{remark}
 
	The proposed decentralized frequency control scheme is provided in the next section, where the DERs leverage information about their local timer state to participate in primary frequency control.

\section{Proposed decentralized control law}\label{sec:controller}
This section presents a fully decentralized control law, which prioritizes DER participation based on a local timer and dynamic state. The designed controller creates additional damping from DER fleet which is added to the conventional system damping, i.e., $D_j$ in equation~\eqref{eq:swingDyn2}. 

\subsection{Local DER control law}\label{subsec:controllaw}
 The overall layout of the proposed controller is shown in Fig.~\ref{fig:flowchart}. 
\begin{figure}
    \centering
    \includegraphics[width=1\columnwidth]{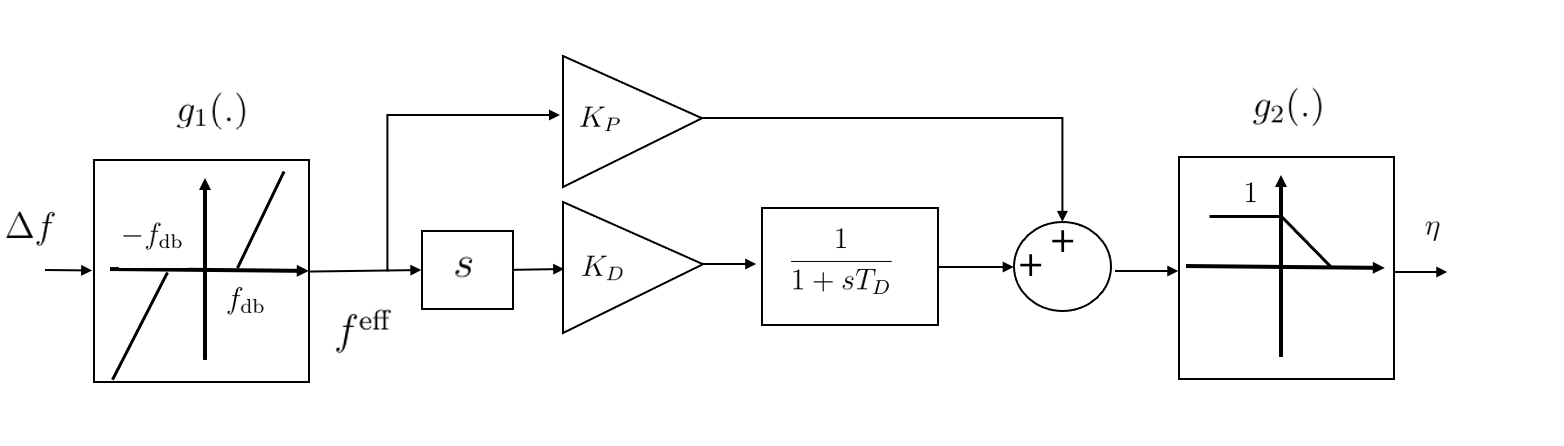}
    \caption{Block diagram of proposed derivative-proportional control law. $g_1(.)$ and $g_2(.)$ are defined in~\eqref{eq:g12}}.
    \label{fig:flowchart}
\end{figure}
A deadband with a size of $ f_{\text{db}}$ is defined such that if $|\Delta f[k]| <  f_{\text{db}}$ DER does not respond to a frequency deviation. $f_{\text{db}}$ is assumed 36 mHz to match a typical synchronous generator droop controller's deadband. A fleet's maximum participation is achieved at $f_{\text{max}}$ when all of the active devices have participated in primary frequency control. 
$\eta_{\text{min}}$ determines the portion of the timer that does not participate in FFR. In other words, devices will be locked for $\eta_{\text{min}}\delta$ seconds after their request is accepted.
The proposed local control law is:
{\small 
\begin{align}\label{eq:eta}
\begin{split}
& \eta_n[k]= \begin{cases} 
    	1, & |\Delta f[k]|<f_{\text{db}}\\ 
    	g_2\left(K_P f^{\text{eff}}[k]+K_D \mathcal{D} (f^{\text{eff}}[k])\right) , & f_{\text{db}}\le |\Delta f[k]| \le f_{\text{max}},\\ 
    	\eta_{\text{min}} & |\Delta f[k]| > f_{\text{max}}
    	\end{cases}
    	\end{split}
\end{align}
} where $K_P=1/(f_{\text{max}}-f_{\text{db}})$ and $K_D$ are the design parameters representing proportional and derivative gains, respectively, and $T_D$ is the derivative time constant. The derivative in the Laplace domain is denoted by $s$. $f^{\text{eff}}[k]:=g_1(\Delta f[k])$ with $\Delta f[k]$ the deviation from nominal frequency (e.g., 60 Hz) and functions
\begin{align}\label{eq:g12}
\begin{split}
  &  g_1(x):=\begin{cases}
    \mbox{0,} & \mbox{$|x|\le f_{\text{db}}$}\\
    \mbox{$|x|-f_{\text{db}},$} & \mbox{$ f_{\text{db}}\le|x|\le f_{\text{max}}$}
    \end{cases},
    \\
    & g_2(x):=\min\{\max\{-|x|+1, \,\eta_{
    \text{min}}\}, \,1\}.
    \end{split}
\end{align}
Note that $\mathcal{D}(f)$ is the backward discrete-time difference operator as expressed in~\eqref{eq:discderivative}. It estimates the rate of change of frequency (RoCoF) over the standard 500~ms window~\cite{nerc2015}.
\begin{align}\label{eq:discderivative}
    \mathcal{D}(f[k])=\frac{f[k]-f[k-\alpha_w/\Delta t_s]}{\alpha_w}
\end{align}
where $\alpha_w$ is the window size in [sec]. 
Thus, each device calculates its $\eta_n$ based on the locally measured frequency and participates in the frequency response, if $t_n[k]/\delta\ge \eta_n[k]$ for charging packets and $-t_n[k]/\delta\ge \eta_n[k]$ for discharging packets.
Immediately after a typical frequency event, the frequency deviation is zero while the magnitude of RoCoF is largest (i.e., $\mathcal{D}(f^{\text{eff}})$ is a monotonic function and $\mathcal{D}(f^{\text{eff}})$ approaches 0 exponentially). 
Therefore, the aggregate power response is initially due to the differential term $K_d \mathcal{D}(f^{\text{eff}})$ in~\eqref{eq:eta}. However, since $\mathcal{D}(f^{\text{eff}})\rightarrow 0$ exponentially (top plot in Fig.~\ref{fig:different_pd}), the proportional term $K_p f^{\text{eff}}$ becomes dominant, resulting in a linear decrease in aggregate power with respect to the frequency deviation. Finally, after the frequency reaches its nadir point and starts to recover, no more DERs participate, as shown in the bottom plot of Fig.~\ref{fig:different_pd}. As illustrated in Fig.~\ref{fig:different_pd}, the local control law can effectively coordinate packet participation at scale to improve the frequency response with higher $K_D$ values leading to more responsive (and aggressive) DER participation. 
Some remarks on controller tuning are presented in Subsection~\ref{subsec:tuning}.  
The next subsection makes use of the timer definition and the proposed control law to determine the available synthetic damping in real-time. 

\subsection{Real-time estimation of damping}

Since the coordinator determines how many devices are accepted during each time step and the packet height, i.e., $P^{\text{rate}}_n$ is known for any packet request, $x^{\text{ch}}_{
    	\text{tcl}}[k], x^{\text{ch}}_{
    	\text{ess}}[k]$, and  $x_{\text{ess}}^{\text{dis}}[k]$ can be accurately estimated by the coordinator in (effectively) real-time. Furthermore, to overcome any inaccuracies associated with the communication or actuation delays, 
     the coordinator can use feedback in the form of a simple acknowledgment sent (asynchronously) from each device when its operating state transitions. In addition, the frequency of the system can be measured by the coordinator, and from~\eqref{eq:eta}, a single $\eta[k]$ can be calculated for the entire fleet in real time. Moreover, given that the number of ON devices in each timer bin is known, the coordinator can then determine the available load reduction in response to frequency deviation without the need for additional communication with devices. For example,~\eqref{eq:rt} determines the amount of available power for an under-frequency event. 

        	\begin{align}\label{eq:rt}
    	   \nonumber 
            \Delta P^{\text{DER}}[k]=  &\sum_{i=1}^{\Tilde{K}}\left(x^{\text{ch}}_{\text{tcl}}[k-\delta/\Delta t +i-1]\right. \\ 
            & \left.+ 2x^{\text{ch}}_{\text{ess}}[k-\delta/\Delta t +i-1]\right),
    	\end{align}
    	where $\Tilde{K}:=\lfloor{\eta[k]\delta/\Delta t}\rfloor$. Thus, from the aggregate fleet power and any potential system frequency event (i.e., a deviation with nadir $\Delta f_\text{nadir}$), the coordinator can simply and, in real-time, estimate the available synthetic damping from a DER fleet as 
    	\begin{align}\label{eq:Dsyn}
    	    D^{\text{syn}}[k] = \frac{\Delta P^{\text{DER}}[k]}{\Delta f_{\text{nadir}}-f_{\text{db}}}. 
    	\end{align}
    
As seen in Fig.~\ref{fig:different_pd}, after reaching the nadir frequency, the frequency begins to recover, which results in an increase in $\eta_n[k]$ according to~\eqref{eq:eta}. However, it is important to highlight that even as $\eta_n[k]$ increases, DERs that have already been interrupted will not switch back on again. Therefore, the amount of damping is determined by the nadir frequency, as described in~\eqref{eq:Dsyn}. Fig.~\ref{fig:real_time} illustrates the accuracy of the synthetic damping estimate compared with actual damping provided by the DER fleet for 10 different frequency events. The top figure shows the change in power versus the change in frequency for one of these realizations. 
    	\begin{figure}
    \centering
    \includegraphics[width=1\columnwidth]{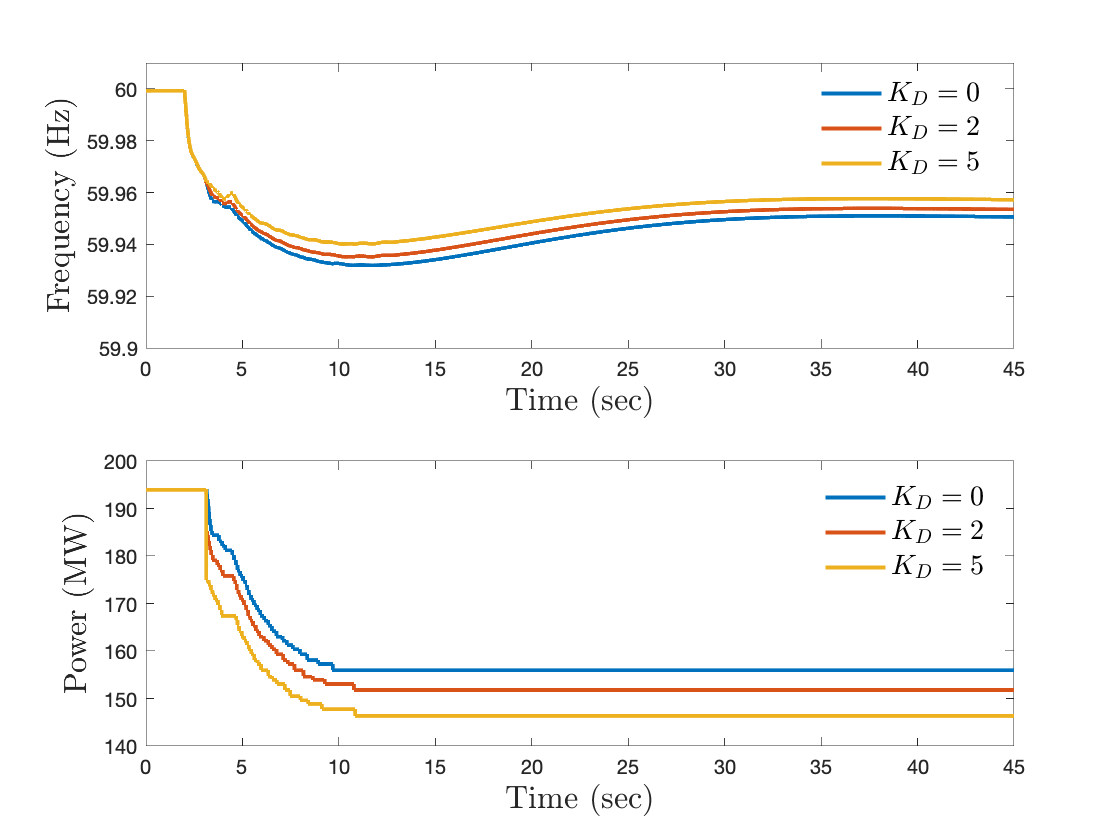}
    \caption{Frequency response of 200,000 TCLs for different values of $K_D$.}
    \label{fig:different_pd}
\end{figure}
    	\begin{figure}
    \centering
    \includegraphics[width=\columnwidth]{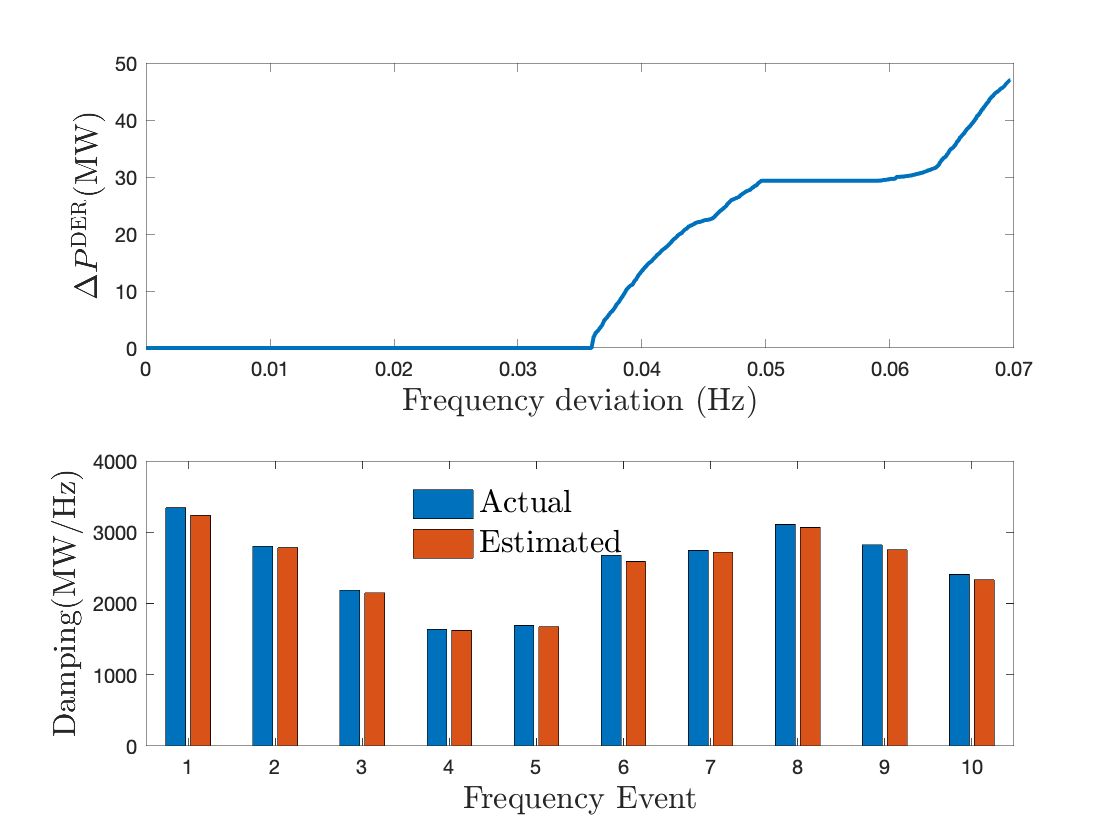}
    \caption{(\textit{Top}) A single frequency event with $\Delta f_\text{nadir} = 0.07$Hz yields a large load reduction in the aggregate DER fleet as a function of the frequency deviation, packet timer distribution, and designed control law. (\textit{Bottom}) Comparing the actual damping with the real-time estimate of synthetic damping for 10 different frequency events.}
    \label{fig:real_time}
\end{figure}

    Calculation time to find synthetic damping for a fleet of 200,000 DERs from Eq.~\eqref{eq:eta}, \eqref{eq:rt} and \eqref{eq:Dsyn} takes less than 500 $\mu s$ which is acceptable for real-time applications. This allows the coordinator to accurately and continuously estimate the available synthetic damping across a number of what-if scenarios (e.g., different frequency nadir and ROCOF pairs). Clearly, 100s of these calculations could be executed every 60 seconds to provide the coordinator/ISO with an accurate and real-time estimate of available synthetic damping capability from a fleet of DERs. The ISOs could then use these data/capabilities to evaluate stability margins/contingencies/ancillary services needs/etc.   	

In this section, a method is provided for the coordinator to accurately estimate the synthetic damping available from a fleet of DERs in any real-time operating condition (i.e., with an arbitrary, but known timer distribution). However, the coordinator may want to know a day- or hour ahead how much synthetic damping will be available from the fleet, in which case the timer distribution is unknown. Thus, in the next section, a probabilistic prediction of the available synthetic damping from a fleet of DERs participating in frequency regulation (e.g., PJM's Reg-D) is developed that captures a range of operating conditions via the amplitude of the Reg-D regulation signal. The method in~\cite{mavalizadeh2020SGC} is generalized to a bidirectional fleet by leveraging a specific packet-acceptance policy from~\cite{duffaut2020TCST} to ensure a unique mapping between the reference signal and the power-timer distribution. This enables an analytical characterization of synthetic damping statistics. It also can toggle the load between charging and discharging modes which double the synthetic damping available. Consequently, this permits us to analyze the trade-off between expected primary (damping) and secondary frequency control capabilities and (statistically) guarantee a DER fleet's ability to deliver synthetic damping.

\section{Characterizing the available synthetic damping}\label{sec:tradeoff}

 Here, a probabilistic framework incorporating historical AGC data is used to characterize the available synthetic damping that can be guaranteed (i.e., a lower bound) from a fleet of DERs that are coordinated via ON/Charge/Discharge packets and providing a certain MW-level of frequency regulation (AGC) services. The process of mapping a representative AGC signal from the fleet's timer distribution to the change in the fleet's aggregate power due to a frequency event is outlined next. It is based on a spectral decomposition of historical AGC (PJM Reg-D) data and was inspired by~\cite{brahma2022IREP, PNNL2011}.
 

To map timer states to changes in power, $\Delta P^\text{DER}$, for a given frequency event, the (conveniently) designed frequency-dependent timer threshold, $\eta$ is leveraged. 
\begin{remark}
The minimization policy, as presented in~\cite{duffaut2020TCST}, aims to track the reference signal with the fewest number of DERs. This policy guarantees that the number of DERs in the timer is always lower compared to other control policies. Consequently, it establishes a lower limit on the flexibility (i.e., synthetic damping) achievable for a given fleet. 
\end{remark}
However, the shape of the timer distribution is unknown in advance as it depends on the fleet's operating conditions (i.e., the reference signal and the number of available packet requests). Under the assumptions that (A1) a sufficient number of packet requests are available to the DER coordinator for effective aggregate power reference tracking (i.e., negligible tracking error); (A2) a fixed packet-request-acceptance policy (e.g, minimize the number of accepted packets) is adopted~\cite{duffaut2020TCST}; and (A3) the power reference signal is known ahead of time, then the exact timer distribution can be constructed over the duration of a packet epoch and the available synthetic damping can be estimated. However, if the coordinator wants to \textit{predict} the available synthetic damping ahead of time (e.g., for possible FFR markets or planning studies), the exact AGC power reference will be unknown (i.e., assumption A3 will not hold), which implies that the timer distribution will be unknown. To overcome this challenge, historical AGC data is used to characterize the statistics of the timer distribution and thus to provide a probabilistic lower bound on the synthetic damping availability. Hence, the methodology for characterizing the synthetic damping availability from a fleet of DERs consists of the following steps:
\begin{enumerate}[label=(\Alph*)]
    \item Decompose historical AGC signals into its $N$ most salient harmonics.
    \item From the spectral decomposition and under assumptions A1 and A2, determine the statistics of the corresponding timer distribution.
    \item Using the timer distribution statistics, determine a probabilistic lower bound on the number of packets in each timer. 
    \item Compute the probabilistic lower bound on the available synthetic damping from the fleet.
\end{enumerate}

\subsection{Spectral decomposition of AGC data}
Following~\cite{PNNL2011}, where a clustering technique is able to categorize similar 2-hour samples of historical AGC data based on spectral analysis, a 2-hour AGC sample is considered. That is, the \textit{methodology} is representative of a historical 2-hour AGC signal. However, the methodology can readily be applied to larger AGC data sets via methods presented in~\cite{PNNL2011}.

Consider the spectral (Fourier-based) decomposition of a 2-hour-long historical AGC sample signal of 2-second resolution into its $N$ most salient harmonics as
\begin{align}\label{eq:agc}
    AGC[k] \approx A \sum_{h=1}^{N} H_h[k],
\end{align}
where $H_h[k]:=c_h\cos(2\pi h f_0k-\phi_h)$. The coefficient $c_h$ for the $h$-th harmonic is scaled between 1 and -1, while phase shift is denoted by $\phi_h$, and $f_0$ is the signal's fundamental frequency. $A$ is the amplitude of the reference signal in MW. The reference signal is then defined as $P^{\text{ref}}[k]=P^{\text{nom}}+AGC[k]$ where $P^{\text{nom}}$ is the power set-point that maintains the average SoC of the fleet stationary and $AGC[k]$ is obtained from~\eqref{eq:agc}. The goal is to map $P^{\text{ref}}$ to the coordinator's timer distribution (under assumptions~A1 and A2). The procedure is detailed next for under-frequency events. The derivation for over-frequency events follows similarly. An expression for the total power of accepted requests at time $k$ for each harmonic $h$, $q^{+}_h[k]$, must be found first. $q^{+}_h[k]$ determines the power in the first bin of the timer at time $k$ for harmonic $h$. This is then used to find $q^{+}[k]$ which is the total number of accepted requests at time $k$.

It is convenient to decompose $H_h$ into two functions an increasing function ($Y_h[k]$) and a decreasing function ($Z_h[k]$). That is, $H_h[k]:= Y_h[k]-Z_h[k]$, where
		\begin{subequations}
		\begin{align}
		\begin{split}\label{eq:y1}
		Y_h[k]&=
		\begin{cases} 
		H_h[k], & \text{if}\;\; H_h[k]-H_h[k-1]>0\\ 
		0, & \text{otherwise}
		\end{cases}
		\end{split}
		\\
		\begin{split}\label{eq:y2}
		Z_h[k]&=
		\begin{cases} 
		-H_h[k], & \text{if}\;\; H_h[k]-H_h[k-1]<0\\ 
		0, & \text{otherwise}
		\end{cases}.
		\end{split}	
		\end{align}
				\end{subequations}
Under assumption A1, $Y_h[k]-Y_h[k-1]$ and $Z_h[k]-Z_h[k-1]$ define, respectively, the net increase and decrease in DER aggregate power reference signal at each time step $k$. Now, if enough packet requests are assumed available, the coordinator can accept enough of them to match this increase or decrease. Thus, at each $k$, each of the reference signal's harmonics can be matched with enough number of accepted packet requests entering the coordinator's timer. Note that when the fleet includes both charging and discharging requests, the mapping becomes non-unique (since charging and discharging packets can effectively {\em cancel} each other out). To ensure a unique mapping between (harmonic) reference signals and accepted requests, a minimizing packet accepting the policy at the coordinator is employed that essentially does not select both charging and discharging requests at the same time~\cite{duffaut2020TCST}.
 Thus, under assumptions A1 and A2, $q^+_h[k]$ is written as,
\begin{align}\label{eq:q+}
		&\nonumber q^{+}_{\text{h}}[k]=\\\nonumber &\left(f(Y_h[k]-Y_h[k-1])-f(Z_h[k-n_p]-Z_h[k-1-n_p])\right)\\
		&+\left(f(Y_h[k-n_p]-Y_h[k-1-n_p])\right),
		\end{align}
		where $f(x)=x$ for $x\ge 0$ and $f(x)=0$, otherwise. In~\eqref{eq:q+}, $f(Y_h[k]-Y_h[k-1])$ determines the increase in the reference signal while the second and third terms determine the number of expired discharging and charging packets, respectively. 
		
		The next theorem characterizes the statistics of the charging timer for the minimization policy.
	\begin{theorem}\label{thm:1}
	Let $\delta$, $\Delta f$, $P^{\text{cap}}$, and $n_{\text{p}}$, $\eta_{\text{min}}$, $f_{\text{max}}$, $f_{\text{db}}$ and $K_D$ be fixed for a given fleet under decentralized control policy~\eqref{eq:eta}. The mean and standard deviation of $q^{+}$ are given by: 
\begin{align}\label{eq:mean_q}
 	 &\mathbb{E}(q^{+})=n_u+\frac{N_dA\sum_{h=1}^{N}hc_h\Delta t}{T},\\ 
 	&\sigma^2(q^{+})=\sum_{h=1}^{N}\left(\frac{1-e^{\frac{-T^2}{6h^2}}}{2}(2\pi \Delta t f_hAc_h)^2 +\frac{2N_dA^2hc_h^2\Delta t}{T}\right)\label{eq:std_q},
    \end{align}
    where $n_u=P^{\text{nom}}/(n_p)$, $N_d=2$ for ESS fleet and $N_d=0$ for TCL fleet.
    \end{theorem}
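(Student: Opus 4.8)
The plan is to treat the sample index $k$ within one fundamental period (equivalently, the harmonic phases $\phi_h$) as the source of randomness, collapse the per-harmonic acceptance $q^+_h[k]$ of~\eqref{eq:q+} into a single compact sinusoidal expression, and compute the statistics harmonic-by-harmonic before summing. Writing the one-step increment of the reference harmonic as $A(H_h[k]-H_h[k-1])\approx A\dot H_h\,\Delta t = -B_h\sin\theta_k$ with increment amplitude $B_h:=2\pi\Delta t f_h A c_h$, $f_h:=hf_0=h/T$, and $\theta_k:=2\pi h f_0 k-\phi_h$, the increase/decrease split of~\eqref{eq:y1}--\eqref{eq:y2} becomes a half-wave rectification of this sinusoid, and the $n_p$-step delay in~\eqref{eq:q+} is a fixed phase shift $\psi_h:=2\pi h\delta/T$ (since $n_p\Delta t=\delta$). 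Throughout I would use assumptions A1 and A2 so that every reference increment is matched exactly by accepted packets under the minimization policy.

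First I would establish the mean. The $P^{\text{nom}}$ contribution is fixed by a steady-state power balance: each accepted packet persists for $n_p$ steps, so mean fleet power equals $n_p$ times the mean acceptance rate, and matching this to $P^{\text{nom}}$ forces the phase-independent baseline $n_u=P^{\text{nom}}/n_p$. For the oscillatory part I would apply the elementary identity $\max(-x,0)-\max(x,0)=-x$ to merge the expired-charging and expired-discharging terms of~\eqref{eq:q+}, reducing the bidirectional (ESS) case to $q^+_h=B_h\big[\max(-\sin\theta_k,0)-\sin(\theta_k-\psi_h)\big]$. Taking the expectation over uniform $\theta_k$ with $\mathbb{E}[\max(-\sin\theta,0)]=1/\pi$ and $\mathbb{E}[\sin(\cdot)]=0$ gives $\mathbb{E}(q^+_h)=B_h/\pi=2f_h A c_h\Delta t=2hAc_h\Delta t/T$, i.e. the $N_d=2$ coefficient. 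For the charge-only (TCL) fleet the discharge terms are absent and the surviving terms are odd in $\sin$, so $\mathbb{E}(q^+_h)=0$, i.e. $N_d=0$; this also agrees with the power-balance argument. Summing over $h$ and adding $n_u$ yields~\eqref{eq:mean_q}.

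For the variance I would invoke (near-)orthogonality of distinct harmonics so that $\operatorname{Var}(q^+)=\sum_h\operatorname{Var}(q^+_h)$, then split each per-harmonic variance into a continuous-modulation part and a discrete packet-granularity part. The continuous part is the variance of the sinusoidal increment $B_h\sin\Theta$; the key move is to model the effective phase $\Theta$ as a (wrapped) Gaussian whose variance $s^2$ matches that of a uniform distribution over harmonic $h$'s period, $s^2=T^2/(12h^2)$, so that $\operatorname{Var}(\sin\Theta)=\tfrac12(1-e^{-2s^2})=\tfrac12\big(1-e^{-T^2/(6h^2)}\big)$, producing the first term $\tfrac12\big(1-e^{-T^2/(6h^2)}\big)(2\pi\Delta t f_h A c_h)^2$ (which reduces to the plain $\tfrac12 B_h^2$ for $h\ll T$). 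The second, ESS-only, part accounts for the integer granularity of accepting/toggling packets: treating the accepted packets as a counting process whose variance scales with the mean rate $\propto hAc_h\Delta t/T$ and with the per-packet amplitude $\propto Ac_h$, I would obtain $2N_d A^2 h c_h^2\Delta t/T$, which vanishes for TCL. Summing gives~\eqref{eq:std_q}.

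The hard part will be the variance, and specifically its first term: justifying the wrapped-Gaussian phase model and the identification $s^2=T^2/(12h^2)$ that produces the $\tfrac12\big(1-e^{-T^2/(6h^2)}\big)$ factor, together with the cross-harmonic independence needed to add variances. Reconciling the half-wave rectification $f(\cdot)$ and the discrete packet-counting granularity consistently across the charge-only and bidirectional cases (and tracking the sample-versus-second unit conventions hidden in $f_0$, $\psi_h$, and $s^2$) is the delicate bookkeeping, whereas the mean reduces to routine averages of $\max(-\sin\theta,0)$ and $\sin\theta$.
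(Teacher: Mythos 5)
Your route is essentially the paper's: approximate the one-step harmonic increment by $-2\pi f_h A c_h \Delta t\,\sin(2\pi f_h k\Delta t)$, treat the time index as uniform over a period, obtain the $N_d$-dependent mean correction from the rectification $f(\cdot)$, add $n_u=P^{\text{nom}}/n_p$ for nominal tracking, and sum means and variances across harmonics under an independence assumption. Your handling of the ESS mean is in fact cleaner than the paper's: the identity $\max(-x,0)-\max(x,0)=-x$ that collapses the two delayed terms of~\eqref{eq:q+}, together with $\mathbb{E}[\max(-\sin\theta,0)]=1/\pi$, reproduces the $2Ahc_h\Delta t/T$ contribution directly, whereas the paper gets there by counting the $N_d=2$ points per period at which $f(Y_h[k]-Y_h[k-1])$ differs from $Y_h[k]-Y_h[k-1]$.

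Two caveats. First, your TCL mean step is internally inconsistent with~\eqref{eq:q+}: with $Z_h\equiv 0$ and $Y_h=H_h$ the surviving terms are $f(\Delta H_h[k])+f(\Delta H_h[k-n_p])$, i.e.\ half-wave rectifications, which are \emph{not} odd in $\sin$ and each carry mean $(2\pi\Delta t f_h Ac_h)/\pi=2Ahc_h\Delta t/T$; the conclusion $N_d=0$ has to come from your power-balance/periodicity argument (the harmonic integrates to zero over a period, so reference decreases offset the replacement of expiring packets), not from oddness. Second, do not expect the paper to supply the justification you flag as the hard part of the variance: its proof stops at ``the mean and variance of $\sin(2\pi f_h k\Delta t)$ with $k\sim\mathcal{U}(0,1/f_h)$,'' which taken literally gives $\tfrac12(2\pi\Delta t f_h Ac_h)^2$ with no exponential attenuation, and it never derives either the $\tfrac12\bigl(1-e^{-T^2/(6h^2)}\bigr)$ factor or the $2N_dA^2hc_h^2\Delta t/T$ term. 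Your wrapped-Gaussian phase model with $s^2=T^2/(12h^2)$ and the packet-counting shot-noise interpretation are plausible reverse-engineered mechanisms, but they are your additions; the paper leaves exactly those two steps unproved, so your reconstruction matches the paper's argument everywhere the paper actually argues, and is honest about where neither of you has a derivation.
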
 
\begin{proof}
The proof is done by construction. The properties of $Y_h$ and $Z_h$ are utilized to simplify~\eqref{eq:q+} and find the mean and standard deviation. From~\eqref{eq:y1} and~\eqref{eq:y2} and the definition of $f$, it can be seen that
	\begin{align}\label{eq:fy}
	\begin{split}
\lefteqn{\hspace*{-0.6in}f(Y_h[k]-Y_h[k-1])=}\\
	&\begin{cases} 
	0, & k \in \frac{iT}{2}\;\; i=\pm 1, \pm 2, \dots \\ 
	Y_h[k]-Y_h[k-1], &\text{otherwise}
	\end{cases}
	\end{split}
	\end{align}
	In the case of the ESS fleet, the above equation implies that there are two discontinuities in $f(Y_h[k]-Y_h[k-1])$ during a period. That is $N_d=2$. In the case of a TCL fleet, since there are no discharging requests, $Z_h[k]=0$, and $H_h[k]=Y_h[k]$, which leads to $N_d=0$. Note also that $Y_h[k]-Y_h[k-1] = 0$ when $Y_h[k] =0$ and in any other case  
 \begin{align}
   Y_h[k]-Y_h[k-1] = & {\;} Ac_h\cos (2\pi f_h k\Delta t)-Ac_h\cos (2\pi f_h(k-1)\Delta t) \nonumber \\
   \approx & {\;} -2\pi f_h Ac_h\Delta t \sin (2\pi f_hk\Delta t), \label{eq:approx}
 \end{align}
 where $\omega_h = 2\pi f_h$.
	Using \eqref{eq:fy} and \eqref{eq:approx}, one can find the mean and variance of $q^+_h$ directly from the mean and variance of $\sin (2\pi f_hk\Delta t)$, where $k\sim \mathcal{U}(0, 1/f_h)$ represents the random time of failure. 
	The interest here is to find the mean and variance of $f(Y_h[k]-Y_h[k-1])$, which can be obtained using the corresponding mean and variance of $Y_h[k]-Y_h[k-1]$ given that these expressions only differ by $N_d$ over each period. Recalling that for a set $X_1, X_2,\dots, X_N$  of mutually independent normal random variables with corresponding means $\mu_1, \mu_2,\dots, \mu_N$ and variances $\sigma_1^2, \sigma_2^2,\dots, \sigma_N^2$ one has that
    \begin{align}\label{eq:Y}
		Y=\sum_{h=1}^{N}c_hX_h\sim \mathcal{N}\left(\sum_{h=1}^{N}c_h\mu_h,\sum_{h=1}^{N}c^2_h\sigma_h^2\right),
		\end{align}
  and assuming that the dependence between harmonics of the AGC decomposition is negligible, then the mean and variance of the timer variable $q^+$ can be obtained by adding the mean and variance of each harmonic. In addition, the expected value of the power of the accepted requests corresponding to tracking $P^{\text{nom}}$ is $n_u=P^{\text{nom}}/(n_p)$  \cite{mavalizadeh2020SGC}. Therefore, using~\eqref{eq:Y}, $q^{+}$ has a Gaussian distribution with mean and standard deviation given by~\eqref{eq:mean_q} and~\eqref{eq:std_q}.
\end{proof}

  \subsection{Finding probabilistic lower bound on available damping}\label{subsec:lowerbound}
    
Theorem 1 is now used to compute a lower bound on the total power in each bin of the timer, $P_{\text{min}}$, analytically. That is, $P_{\text{min}}$ is estimated to be
        \begin{align}\label{eq:nmin}
         & P_{\text{min}}=\mathbb{E}(q^{+})-F\sigma(q^{+}).
        \end{align}
 where $F$ is a {\em safety factor} determined by the operator. $P_{\text{min}}$ estimates, at each time step, the minimum power inside the timer. Higher $F$ leads to a more robust estimate but at the same time results in a more conservative estimate of the fleet's available damping. $F$ is usually defined by the information available on the underlying distribution of the uncertainty, e.g., distribution, statistics, etc. If more information is available about the distribution of uncertainty, less conservative estimation can be made. The probability of being within F standard deviations is defined as $\rho:=P(q^{+}[k]>\mathbb{E}(q^{+})-F\sigma(q^{+}))$. If no information about the distribution of $n_{\text{min}}$ is available (only mean and standard deviation are known), then using the Borel-Cantelli inequality~\cite{Bartolomeo2017}, $F$ is
     \begin{align}\label{eq:borel}
         F=\left(\frac{1-\rho}{\rho}\right)^\frac{1}{2}\cdot
     \end{align}
     The bounds obtained by Borel-Cantelli inequality are the worst-case scenarios and are unlikely to be encountered in practice. By assuming that there exists evidence that the distribution is unimodal, the Chebyshev generating function (CGF) can be used as shown below
     \begin{align}\label{eq:gcf}
         F=\left(\frac{1-\rho}{e \rho}\right)^\frac{1}{1.95}\cdot
     \end{align}
    \cite{stellato2014}. Furthermore, if the distribution is Gaussian, the safety factor is found as follows:
    \begin{align}\label{eq:gaussian}
        F=\sqrt{2}\;\;\text{erf}^{-1}(1-2\rho)\cdot
    \end{align}
 where $\rho$ is the probability of violation of the lower bound on $P_{\text{min}}$ calculated by~\eqref{eq:nmin}. By choosing the desired level for $\rho$ and based on the level of information available to the coordinator, $F$ is selected from the above equations.
 
In~\cite{mavalizadeh2020SGC} a method to calculate the synthetic damping for a uniform timer distribution was presented. By assuming that all of the timer bins are at $P_{\text{min}}$, one can use the method described in~\cite{mavalizadeh2020SGC}, to calculate a constant value for lower-bound damping. The next theorem is used to find the probabilistic lower bound on damping from $P_{\text{min}}$. 
    \begin{theorem}\label{thm:2}
    Given Theorem~\ref{thm:1}, for a given contingency with known $\Delta f_{\text{nadir}}\in(f_{\text{db}},f_{\text{max}})$ and RoCoF, the minimum damping for under-frequency events is,
    \begin{align}\label{eq:lowerbound}
	D^{min}=n_pP_{\text{min}}^{\text{eff}}.\left(K_P+\frac{K_{D} R^{\text{max}}}{\Delta f_{\text{nadir}}-f_{\text{db}}}\right).
\end{align}
\end{theorem}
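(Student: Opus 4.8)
The plan is to reduce the statement to the real-time damping identity already established in~\eqref{eq:Dsyn}--\eqref{eq:rt} and then replace the two quantities that are unknown ahead of time — the power held in each timer bin and the participation level — by, respectively, the probabilistic lower bound of Theorem~\ref{thm:1} and the value dictated by the control law~\eqref{eq:eta} for the prescribed contingency. Concretely, I would start from $D^{\text{syn}}=\Delta P^{\text{DER}}/(\Delta f_{\text{nadir}}-f_{\text{db}})$ and from the participating-power sum~\eqref{eq:rt}, so that the whole argument collapses to (i) lower-bounding each occupied bin's power and (ii) counting the responsive bins at the nadir.

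For step (i) I would use the construction behind Theorem~\ref{thm:1}: under the minimization policy the power injected into the first timer bin at each step is the random variable $q^{+}$ with mean~\eqref{eq:mean_q} and standard deviation~\eqref{eq:std_q}, and propagation through $M$ carries this value bin-by-bin, so each occupied bin holds at least $P_{\text{min}}=\mathbb{E}(q^{+})-F\sigma(q^{+})$ at the chosen confidence $\rho$. Taking every responsive bin to hold exactly this guaranteed minimum, and folding the factor-two toggling weight that the bidirectional ESS bins carry in~\eqref{eq:rt} into the effective quantity $P_{\text{min}}^{\text{eff}}$, turns the sum in~\eqref{eq:rt} into a simple product of the responsive bin count and $P_{\text{min}}^{\text{eff}}$ evaluated at the worst-case distribution.

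For step (ii) I would read the participation level directly off~\eqref{eq:eta}: in the active region $f_{\text{db}}\le|\Delta f|\le f_{\text{max}}$ the threshold returned by $g_2$ is $\eta=1-|u|$ with $u:=K_Pf^{\text{eff}}+K_D\mathcal{D}(f^{\text{eff}})$, so the participating fraction is $1-\eta=|u|$ and the responsive bin count is $n_p|u|$. Evaluating the proportional part at the nadir via $g_1$ gives $f^{\text{eff}}=\Delta f_{\text{nadir}}-f_{\text{db}}$, while evaluating the RoCoF part at its peak gives $\mathcal{D}(f^{\text{eff}})=R^{\text{max}}$, so that $|u|=K_P(\Delta f_{\text{nadir}}-f_{\text{db}})+K_DR^{\text{max}}$. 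Substituting $\Delta P^{\text{DER}}=n_p|u|\,P_{\text{min}}^{\text{eff}}$ into~\eqref{eq:Dsyn} and cancelling one factor of $(\Delta f_{\text{nadir}}-f_{\text{db}})$ yields exactly~\eqref{eq:lowerbound}.

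The main obstacle is justifying the \emph{additive} combination of the proportional and derivative contributions. Because interrupted packets never switch back on (as emphasized after~\eqref{eq:Dsyn}), the total participation over the event is governed by the most permissive threshold reached, i.e.\ by $\max_k|u[k]|$; yet the proportional term peaks at the nadir while the RoCoF term peaks at event onset, so the instantaneous $|u|$ never literally equals the sum $K_P(\Delta f_{\text{nadir}}-f_{\text{db}})+K_DR^{\text{max}}$. To justify the additive form I would argue at the level of distinct device cohorts rather than a single frozen threshold: the early, RoCoF-driven interruptions and the near-nadir, proportionally-driven interruptions act on successive cohorts, since packets interrupted or expired early are replaced by freshly accepted ones during the descent, so their bin counts add. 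Making this quantitative — verifying that the timer refreshes enough over the short interval from onset to nadir for the two cohorts to be genuinely distinct, and confirming that the resulting expression is a conservative (lower) rather than an optimistic estimate — is the crux; once it is accepted, the remaining substitution and cancellation are routine.
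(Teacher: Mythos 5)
Your derivation is essentially the paper's own proof: a uniform timer distribution with every bin held at $P_{\text{min}}$ so that $P_{\text{ON}}=n_pP_{\text{min}}$, the participating fraction $1-\eta_{\text{nadir}}=K_Pf^{\text{eff}}+K_D\mathcal{D}(f^{\text{eff}})$ read off~\eqref{eq:eta} with $f^{\text{eff}}=\Delta f_{\text{nadir}}-f_{\text{db}}$, division by $\Delta f_{\text{nadir}}-f_{\text{db}}$ per the damping definition, and the factor of two for ESS toggling folded into $P_{\text{min}}^{\text{eff}}$. On the ``crux'' you flag: the paper does not attempt your cohort argument (which would in any case sit awkwardly with its own assumption that the timer distribution is frozen over a sub-10-second event); it simply substitutes the known maximum RoCoF $R^{\text{max}}$ for $\mathcal{D}(f^{\text{eff}}[k])$ on the grounds that the derivative term dominates at onset and decays exponentially, so your observation that the instantaneous $|u[k]|$ never attains the sum of the two peaks is a genuine caveat about the published proof rather than a gap in your reconstruction of it --- you have reproduced the intended argument and correctly located its weakest step.
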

where $P^{\text{eff}}_{\text{min}}=P_{\text{min}}$ for TCLs and $P^{\text{eff}}_{\text{min}}=2P_{\text{min}}$ for ESS.
    \begin{proof}
    The proof is by construction. From~\eqref{eq:eta}, the total change in the fleet's aggregate power for a uniform distribution is:
    \begin{align}\label{eq:deltap}
        \Delta P^{\text{DER}}=(1-\eta_{\text{nadir}})P_{\text{ON}}, 
    \end{align} 
    where $\eta_{\text{nadir}}$ is the calculated $\eta$ at nadir frequency. $P_{\text{ON}}$ is the total power of ON devices and is calculated as $n_p P_{\text{min}}$.
In under-frequency events, charging devices participate in frequency response, and discharging devices do not participate. From control law in~\eqref{eq:eta}, \eqref{eq:deltap}
     and the damping definition~\cite{ercot2018}, the estimated damping is $	D^{min}=\Delta P^{\text{DER}}/\left(\Delta  f_{\text{nadir}}-f_{\text{db}}\right)$, which leads to $n_pP_{\text{min}}\left(K_P+\frac{K_D R^{\text{max}}}{\Delta f_{\text{nadir}}-f_{\text{db}}}\right)$ for $ f_{\text{db}}\le |\Delta f_{\text{nadir}}|\le f_{\text{max}}$. As mentioned in~\ref{subsec:controllaw}, the magnitude of $\mathcal{D} (f^{\text{eff}}[k])$ is at its maximum at the beginning of the disturbance and it decreases exponentially. Therefore, it is possible to replace $\mathcal{D} (f^{\text{eff}}[k])$ with the known maximum RoCoF of the event, $R^{\text{max}}$. Since ESS charging devices can be toggled to discharging during the frequency event, they can provide twice their capacity. This is captured by $P^{\text{eff}}_{\text{min}}=P_{\text{min}}$ for TCLs and $P^{\text{eff}}_{\text{min}}=2P_{\text{min}}$ for ESS. If the frequency deviation is less than $f_{\text{db}}$, damping is zero as indicated by~\eqref{eq:eta}. Finally, if frequency deviation exceeds $f_{\text{max}}$, all of the available power is shed, resulting in $D^{\text{min}}=\left(P^{cap}n_pn_{\text{min}}\right)/\left(\Delta f_{\text{nadir}}-f_{\text{db}}\right)$.
\end{proof}
Observe that from Eqs.~\eqref{eq:borel},~\eqref{eq:gcf} or~\eqref{eq:gaussian} the probability of violation of $P_{\text{min}}$ ($\rho$) can be found based on the level of information available about the timer distribution. To relate $\rho$ to the probability of violation of $D^{\text{min}}$, the following remark is used.
\begin{remark}
It is straightforward to show that the probability of violation of the calculated minimum damping in Theorem~\ref{thm:2}, is always smaller or equal to the probability of violation of $P_{\text{min}}$ ($\rho$).
\end{remark}
Fig.~\ref{fig:rho} shows the estimated probability of violating the bounds for different information available versus the actual percentage of violations. 9 and 18 MW are chosen for $A$ which are equal to $5 \%$ and $10 \%$ of the nominal power. The green and blue curves show the percentage of violations obtained in simulations whereas the blue, red, and yellow dashed lines are obtained from Eqs.~\eqref{eq:borel},~\eqref{eq:gcf} and~\eqref{eq:gaussian}, respectively.
\begin{remark}
    It should be noted that the lower bound on damping is derived under the assumption of negligible tracking error.  Based on the previous work, an epoch length of 3~minutes or less satisfies this assumption~\cite{khurram2021powertech}. For higher amplitudes of AGC, the tracking error increases since the reference signal has higher fluctuations around the nominal power. Therefore, for higher AGC amplitudes, because of the higher tracking error, the probability of violation of the calculated bounds is higher. This can be seen in Fig.~\ref{fig:rho} by comparing the violation probabilities for A being 5\% and 10\% of the nominal power. 
\end{remark} 
Fig.~\ref{fig:lowerbound} shows the accuracy of the estimated damping versus the true damping for different amplitudes of AGC. A fleet of 200,000 EWHs  with 4.5 kW capacity each is used in a two-area power system. The blue curve shows the mean of the true damping for 100 different realizations, while the green and purple curves show the mean minus 1 and 2 standard deviations, respectively. The dashed line indicates the estimated lower bound on damping calculated by~\eqref{eq:lowerbound}. Similar results are given for an ESS fleet in Fig.~\ref{fig:lowerbound_ess} when $K_D=5$. It should be noted that for each fleet, the reference signal is scaled around the fleet's nominal power. Therefore, the reference signal used to generate figures~\ref{fig:lowerbound} and~\ref{fig:lowerbound_ess} is different which leads to a difference in the provided damping as seen in the figures.  
\begin{figure}
    \centering
    \includegraphics[width=1\columnwidth]{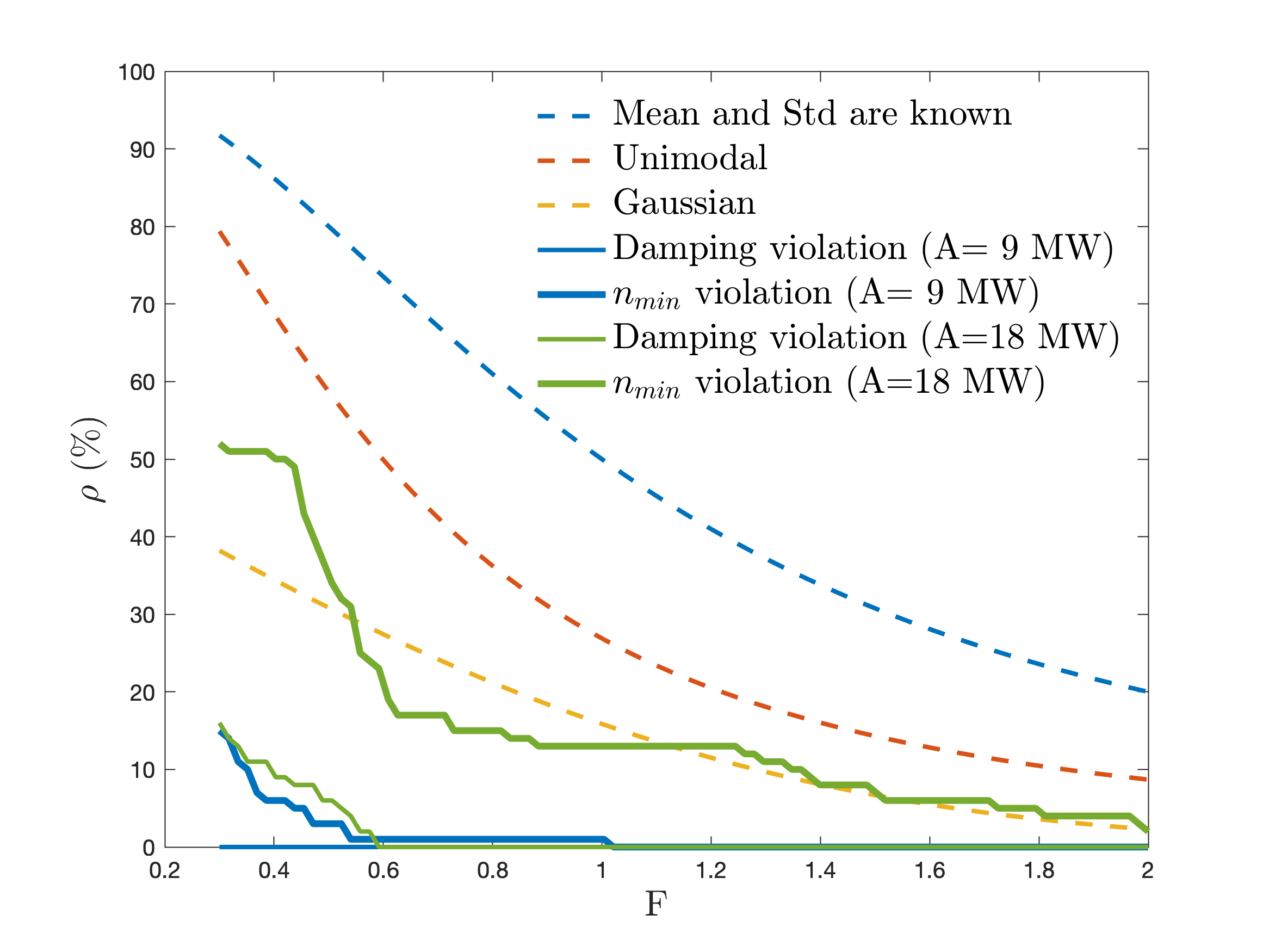}
    \caption{Probability of violating the bounds}
    \label{fig:rho}
\end{figure} 
\begin{figure}
    \centering
    \includegraphics[width=1\columnwidth]{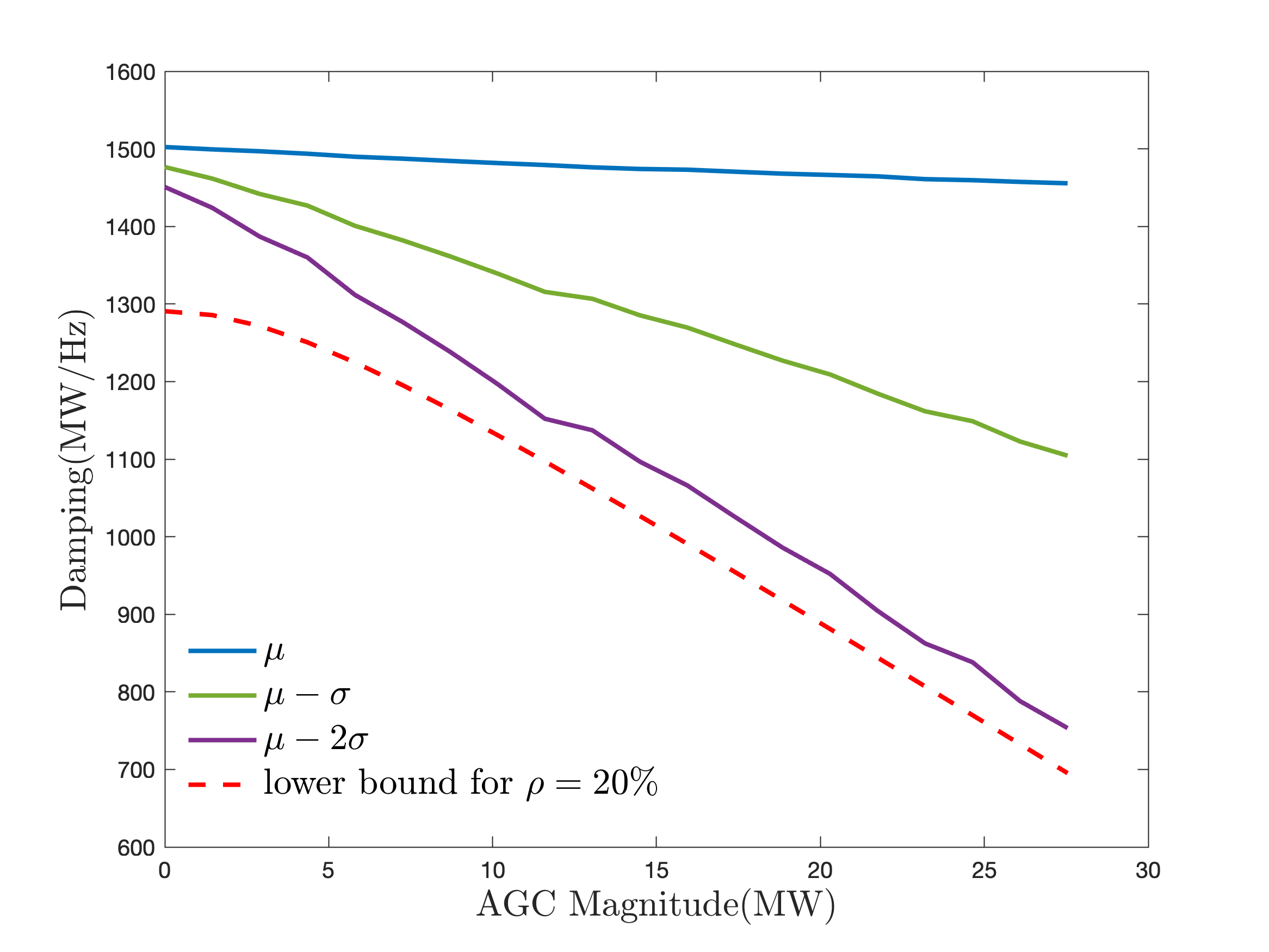}
    \caption{The actual vs estimate damping lower bound for 200,000 TCLs}
    \label{fig:lowerbound}
\end{figure}
\begin{figure}
    \centering
    \includegraphics[width=1\columnwidth]{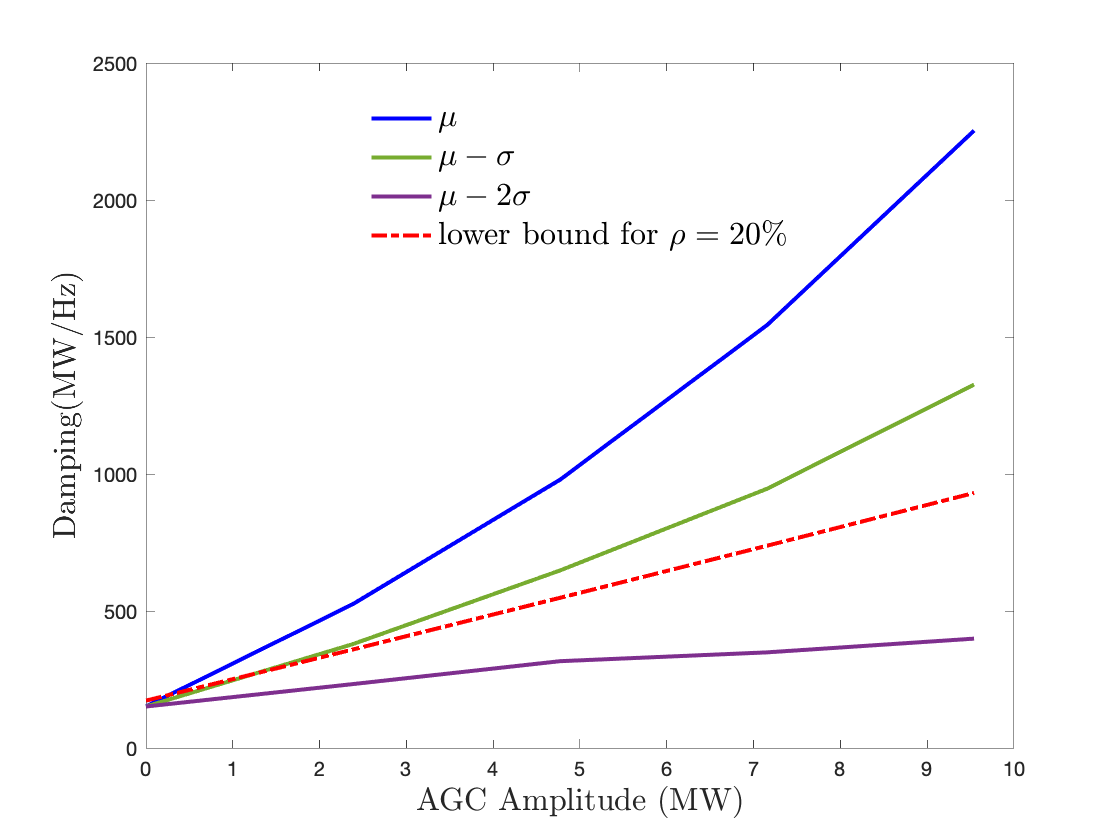}
    \caption{The actual vs estimate damping lower bound for 200,000 batteries}
    \label{fig:lowerbound_ess}
\end{figure}

To find the number of harmonics $(N)$ required to reconstruct the AGC signal a compromise between computation burden and accuracy must be taken into account. The goal is to find a $N$ that gives an acceptable reconstruction error for all of the 1-hour samples of the AGC data. To do so, the reconstruction error is calculated for all of the 1-hour samples in one year. The simulation results show that for $N=23, 35, 53$ the RMSE of reconstruction error is always lower than 30\%, 20\%, and 10\%, respectively. Therefore, by choosing $N=53$, it can be guaranteed that the reconstruction error for any day of the year is lower than 10\%. In this paper, $N=100$ is chosen which limits the construction error for any given day to 6\%.
\subsection{FFR versus frequency regulation trade-off}
From Theorem~\ref{thm:1},~\eqref{eq:nmin} and~\eqref{eq:lowerbound}, it can be seen that tracking a larger AGC signal (larger $A$) leads to higher variance. This always translates to lower $n_{\text{min}}$ and $D^{\text{min}}$ for TCL fleet. For the ESS fleet, since $\mathbb{E}(q^{+})$ is a function of $A$, as mentioned in Theorem~\ref{thm:1}, higher $A$ does not necessarily lead to lower synthetic damping. In this section, a procedure to determine the proper $A$, for a TCL fleet is presented to maximize the total profit. The same procedure can be applied to the ESS fleet, as well.\\
If the prices of frequency regulation is $\beta^{\text{Reg}}$ [\$/MW] and FFR damping is $\beta^{\text{FFR}}$ [\$/MW/Hz], then the total revenue can be written as $\beta^{\text{Reg}} A + \beta^{\text{FFR}}D^{\text{min}}$. If one defines $O:=A\beta + D^{\text{min}}$, where $\beta:=\beta^{\text{Reg}}/\beta^{\text{FFR}}$, then it is straightforward to show that maximizing $O$ maximizes the total revenue. By replacing~\eqref{eq:mean_q} and~\eqref{eq:std_q} in~\eqref{eq:nmin}, one gets \begin{align}\label{eq:nmin2}
P_{\text{min}}= n_u-2F\pi \Delta t f_0A\sqrt{\sum_{h=1}^{N}h^2c_h^2\left(\frac{1-e^{\frac{-T^2}{6h^2}}}{2} \right)}   
\end{align}
Now, substituting~\eqref{eq:nmin2} in~\eqref{eq:lowerbound} gives
\begin{align}\label{eq:O}
    \nonumber O&=A\beta+n_p\left(n_u- 2F\pi \Delta t f_0A\sqrt{\sum_{h=1}^{N}h^2c_h^2\frac{1-e^{\frac{-T^2}{6h^2}}}{2}}\right)\\
    &\cdot \left(K_P+\frac{K_D R^{\text{max}}}{\Delta f_{nadir}-f_{db}}\right).
\end{align}
To maximize $O$ in~\eqref{eq:O}, the derivative with respect to $A$ is calculated as $\frac{\partial \text{O}}{\partial A} = \beta - \beta^{\text{thr}} $, where
\begin{align}\label{eq:betamax}
    \beta^{\text{thr}} =n_p 2F\pi \Delta t f_0\sqrt{\sum_{h=1}^{N}h^2c_h^2\frac{1-e^{\frac{-T^2}{6h^2}}}{2}} \left(K_P+\frac{K_D R^{\text{max}}}{\Delta f_{nadir}-f_{db}}\right)  \cdot
    \end{align}
    From~\eqref{eq:betamax}, it can be seen that $\beta^{\text{thr}}$ is a function of frequency events. Then, for a set of credible system contingencies $\{\mathcal{C}_1,\dots,\mathcal{C}_{N_c}\} \in \mathcal{C} $, with known $\Delta f_{nadir,c}$, $R^{\text{max}}_c$ and probability $w_c$, $\beta^{\text{thr}}_1,\dots,\beta^{\text{thr}}_{N_c}$ can be calculated from~\eqref{eq:betamax}. Finally, a weighted average for $\beta^{\text{thr}}$ is calculated as follows:
    \begin{align}\label{eq:bmaxmean}
    \overline{\beta^{\text{thr}}}=\sum_{c=1}^{N_c} w_c\beta^{\text{thr}}_c.
    \end{align}
The normalized revenue, $O [MW/Hz]$, is presented in Figure~\ref{fig:economic} for different $\beta$ and $A$.  Total revenue in dollars [\$] is obtained by $\beta^{\text{FFR}}O$. 
\begin{figure}[t]
\centering
\includegraphics[width=1\linewidth,trim={0 0 0 0.5cm},clip]{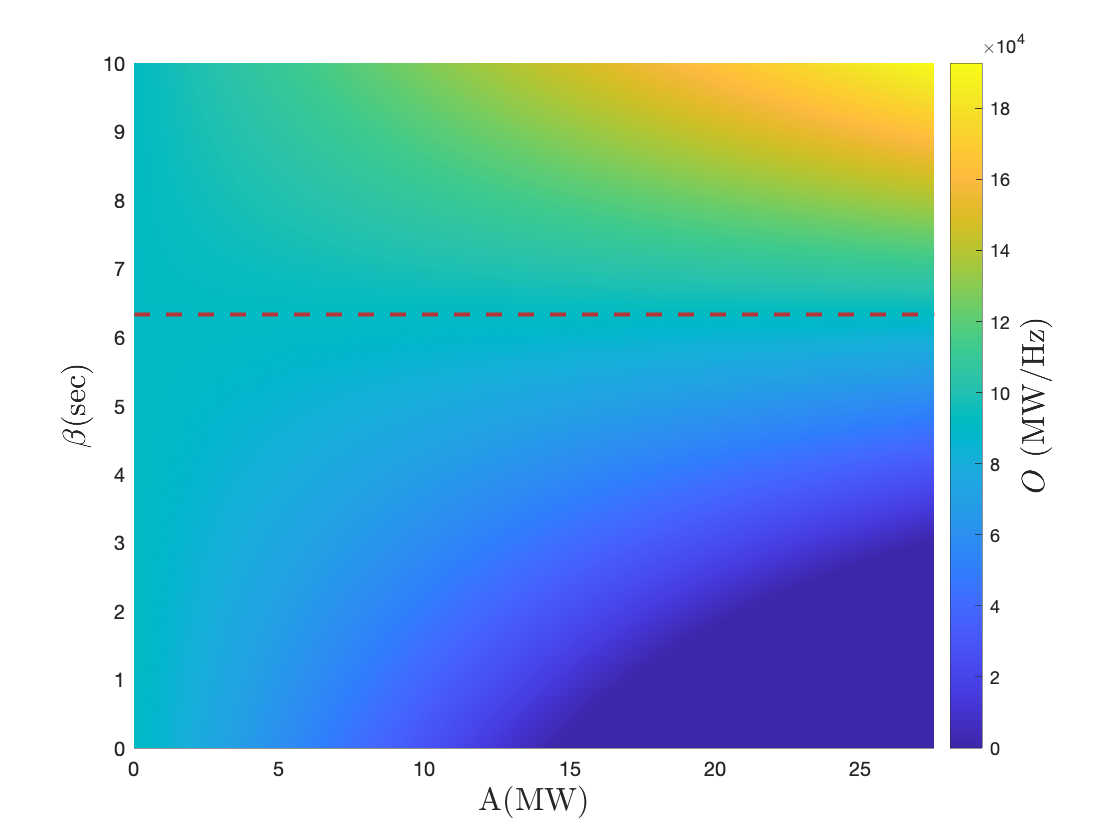}.
  \caption{The impact of ancillary service prices on the total normalized revenue, $O [MW/Hz]$. The red line indicates $\beta^{\text{thr}}$ above which larger AGC magnitudes, increase revenue.}
\label{fig:economic}
\end{figure}
The following conclusions can be derived from~\eqref{eq:O},~\eqref{eq:betamax}, and~\eqref{eq:bmaxmean}:
\begin{enumerate}
    \item  If $\beta \ge \overline{\beta^{\text{thr}}}$, $\frac{\partial \text{O}}{\partial A} > 0$, which means that to maximize the profit, the largest possible value for $A$ must be chosen,  
    subjects to the fleet being able to track without a considerable increase in the tracking error. 
    \item If $\beta < \overline{\beta^{\text{thr}}}$, $A^{\text{opt}}=0$. 
    \end{enumerate}

In the proposed method, DERs are used in two modes. When the frequency is close to nominal (i.e., $|\Delta f[k]|<f_{\text{db}}$) the fleet tracks the AGC signal, and when frequency deviation exceeds $f_{\text{db}}$ fleet goes to primary frequency control mode. Larger AGC amplitudes mean that more resources are used for AGC tracking. It can be seen from Fig.~\ref{fig:lowerbound} that for a TCL fleet, this leads to higher variation in damping which limits the ability to guarantee a minimum value for damping. Therefore, a fundamental trade-off between primary and secondary frequency control capability exists as seen in~\eqref{eq:nmin2}.

Another observation from~\eqref{eq:betamax} is that by using derivative control (i.e., increasing $K_D$), $\overline{\beta^{\text{thr}}}$ increases. This means that providing primary frequency control will be profitable for lower FFR prices ($\beta^{\text{FFR}}$).

\subsection{Tuning of the controller parameters}\label{subsec:tuning}
The coordinator is assumed to only have access to its own DER information and not that of the grid operators or other coordinators. As such, the tuning of $K_\text{P}, K_\text{D}$ is based on the coordinator's fleet information and published system-wide reliability metrics, such as frequency nadir and RoCoF. For credible contingencies in the system, the initial post-contingency rate of change of frequency (RoCoF) ($R^{\text{max}}$) and nadir frequency are considered by the coordinator to characterize the network's frequency response and the DER fleet's available capacity to respond.
 It might be necessary to lock the devices for a certain time after turning them ON for reliability issues or to avoid excessive switching. This can be done by setting $\eta_{\text{min}} > 0$. Using the designed controller, the constraint on $\eta_{\text{min}}$ leads to the following inequality:
\begin{align}
 \nonumber  \eta =  1-\frac{\Delta f_{\text{nadir}}-f_{\text{db}}}{f_{\text{max}}-f_{\text{db}}}-K_D{R}^{\text{max}}\ge \eta_{\text{min}}\\
    \Rightarrow K_D \le \frac{1}{{R}^{\text{max}} }\left(1-\frac{\Delta f_{\text{nadir}}-f_{\text{db}}}{f_{\text{max}}-f_{\text{db}}}-\eta_{\text{min}}\right). \label{eq:KDineq}
\end{align}
In addition to~\eqref{eq:KDineq}, the coordinator needs to design $K_P$ and $K_D$ with the corresponding minimum damping from~\eqref{eq:lowerbound} in mind relative to a desired predefined damping value.  

To illustrate these results, $\eta_{\text{min}}$ and minimum damping are plotted versus $K_D$ in Fig.~\ref{fig:etamin} for the simulation setup of Fig.~\ref{fig:different_pd}. For example, Fig.~\ref{fig:etamin} depicts the relationship between $\eta_\text{min}$, $K_\text{D}$ and  expected available damping. When $\eta_{\text{min}} \le 0.67$, then $K_\text{D} \le 8.30$, which means that the  expected available damping will be less than 4300 MW/Hz. The next section provides insight into practical considerations for packet-based DER coordination and primary frequency control via simulation-based analysis.
\begin{figure}
    \centering
    \includegraphics[width=1\linewidth]{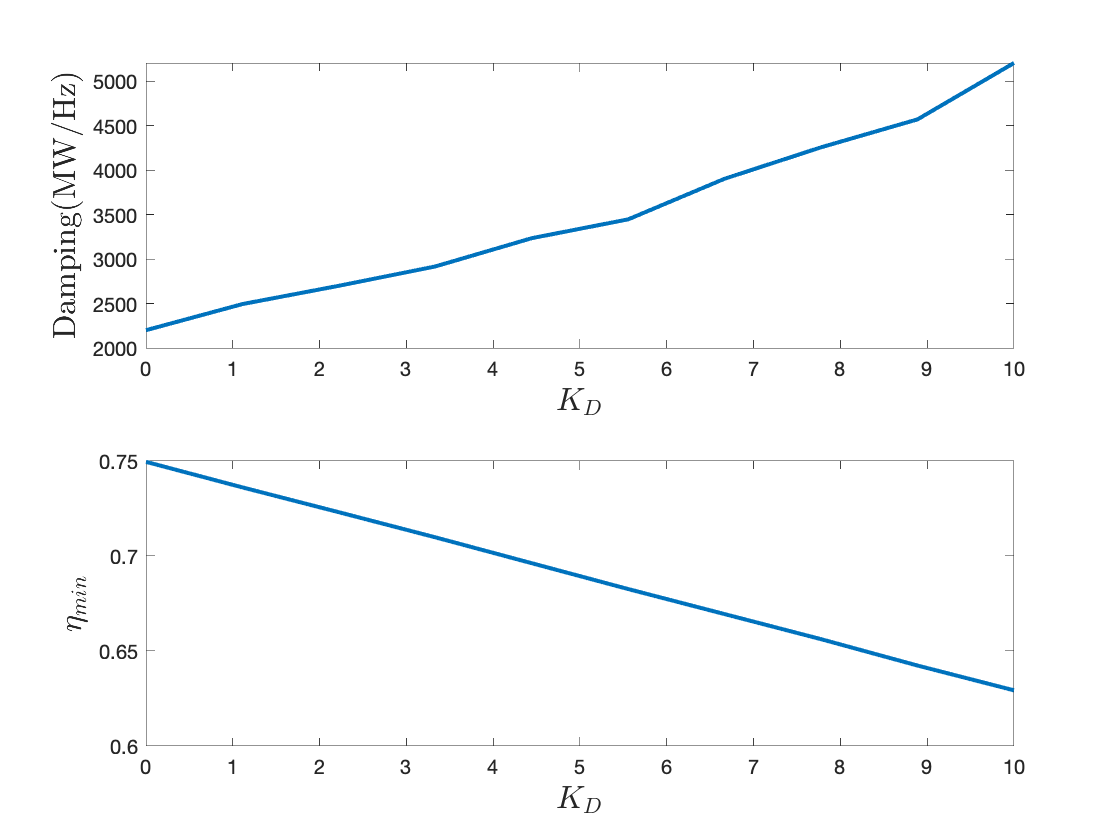}
    \caption{Unused timer and minimum damping for different values of $K_D$.}
    \label{fig:etamin}
\end{figure}

\section{Simulation and practical considerations}\label{sec:practical}
In this section, different practical considerations are tested to verify the performance of the proposed frequency-responsive controller for FFR. Different frequency measurement resolutions (between 1 mHz to 100 mHz) and random actuation delays (between 133 ms to 600 ms) are tested, and the effect on primary frequency control is analyzed. Moreover, simulation results are provided to determine how many DERs are needed to provide equal damping to an average droop-controlled generator in the network. The proposed decentralized controller is tested on the New England 39-bus system~\cite{moeini2015upec}. $K_D=2$ unless otherwise specified. A generation outage of 250 MW occurs at bus 30 at $t=2$ seconds. All of the DERs are connected to bus 20. The simulation setup is provided in table~\ref{t:setup}.

\begin{table}[htbp]
	\centering
	\caption{Simulation Parameters}	
	\begin{tabular}{ll} \toprule
	
		Parameter & Value\\ 
		\midrule
        Fleet size & 200,000 \\		
        $\Delta t$ & 10 ms  \\
        $(\Delta f_\text{db}, \Delta f_\text{max})$ & (36, 200) mHz  \\
		$T^\text{min/set/max}_{n}$& 48.8/52.0/55.2 $C^\circ $ \\ 
		($\delta$, MTTR) & (3,3) mins \\
        \bottomrule
	\end{tabular}
\label{t:setup}
\end{table}
\subsection{Actuation delay}
The effect of actuation delays is shown in Fig.~\ref{fig:delay}. With no actuation delay, the fleet responds to frequency deviation immediately after frequency deviation exceeds $f_{\text{db}}$. While the delay slightly affects the transient behavior, the impact on final frequency (and damping) is negligible. The results illustrate the acceptable performance of the proposed approach even with a 400 to 600 ms delay which is the typical delay value in practice.
\begin{figure}
    \centering
    \includegraphics[width=1\columnwidth]{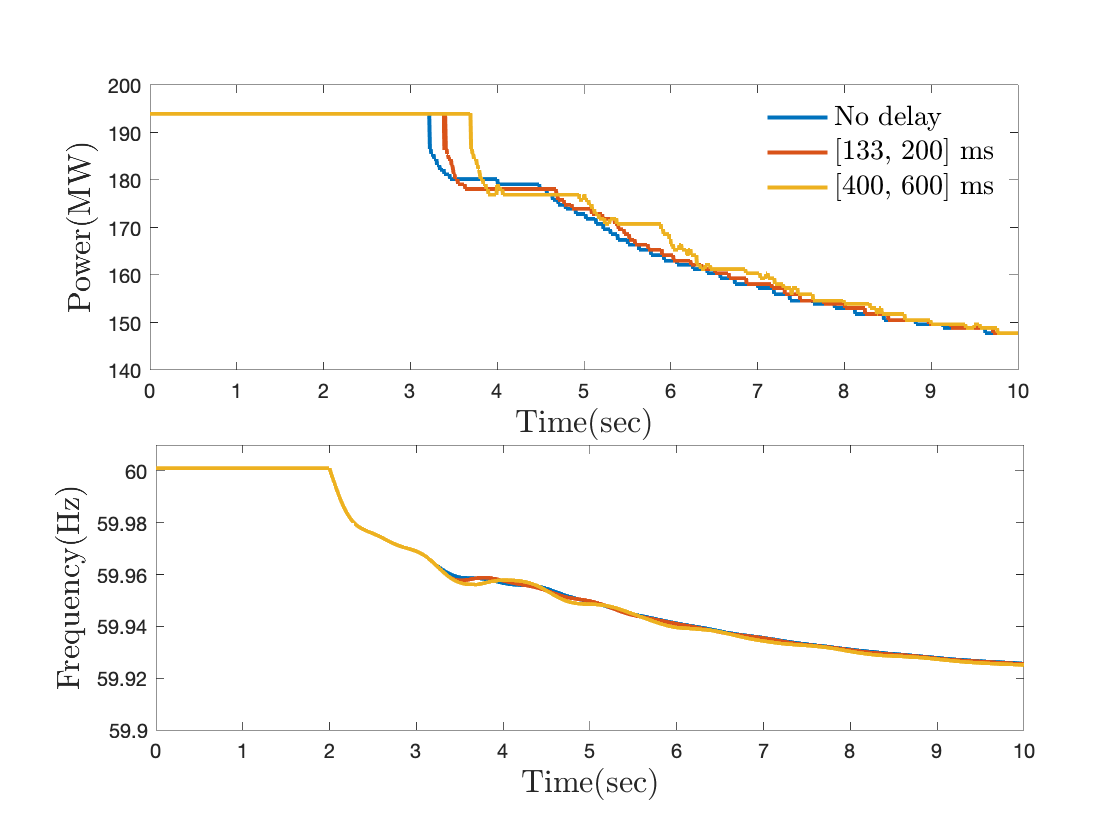}
    \caption{The frequency response for different actuation delays for a population of 200,000 DERs. The disturbance occurs at t=2 s. }
    \label{fig:delay}
\end{figure}
\subsection{Frequency measurement resolution}
Fig.~\ref{fig:resolution} shows the impact of frequency measurement resolution on frequency response. The root mean square error (RMSE) of power interruption is 0.84 MW for 1 mHz resolution, 4.25 MW for 10 mHz resolution, and 34.78 MW for 100 mHz resolution. It can be seen that for 10 mHz frequency measurement resolution, the results are close to the actual values. Therefore, the effect of measurement inaccuracy can be neglected if the measuring devices' accuracy is at least 10 mHz.   
\begin{figure}
    \centering    \includegraphics[width=1\columnwidth]{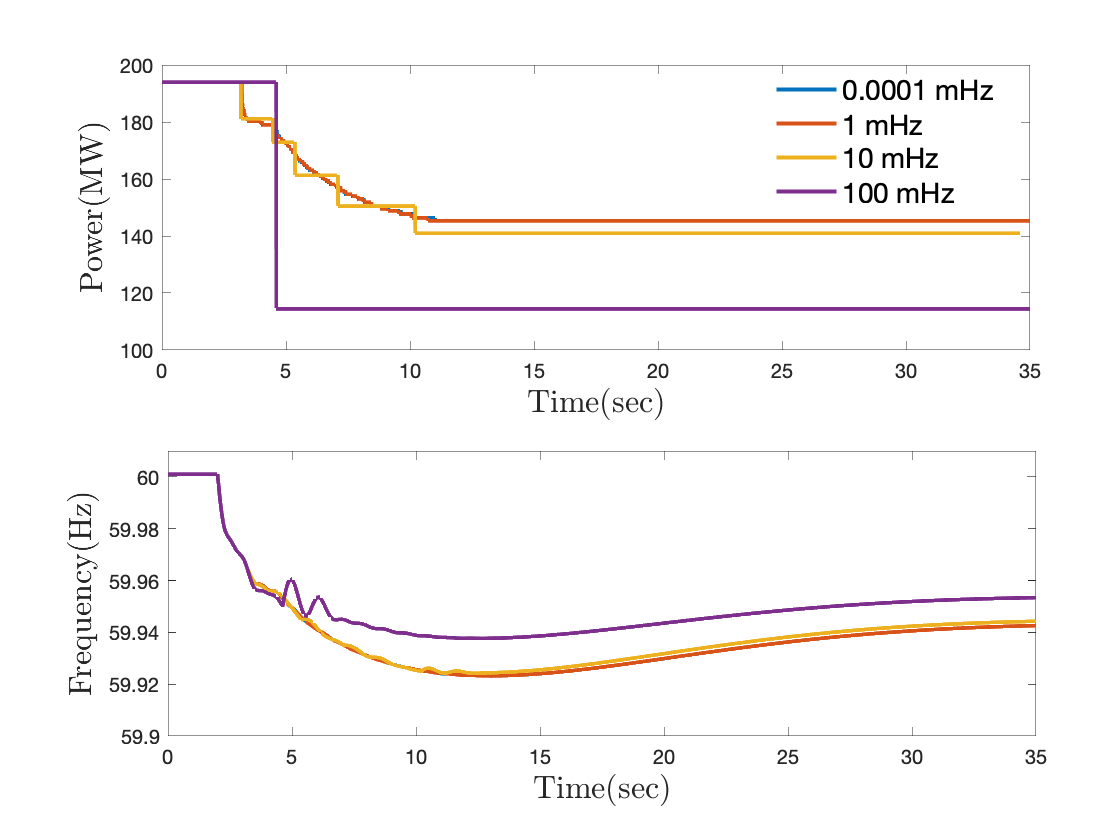}
    \caption{The frequency response for different frequency measurement resolutions for a population of 200,000 TCLs.}
    \label{fig:resolution}
\end{figure}

\subsection{Relating the scale of aggregate DER response}
When coordinating aggregations of DERs, it is of value to understand how many DERs are needed to replicate the synthetic available from a realistic power plant. To answer this question, an experiment is performed in which a generator in the IEEE 39-bus test system is tripped at bus~30, disconnecting 250 MW power. Then, the frequency at bus~39 is measured in two cases: $i)$ 1000 MW Generator at bus~39 with 5\% droop coefficient and no responsive loads $ii)$ deactivate the droop control at bus~39 and replacing it with 80,000 EWHs, each of which has a 4.5 kW power rating. The simulation results are presented in Fig.~\ref{fig:droop}. The yellow curve shows the frequency response at bus~39 without droop control and without DER coordination. The red curve shows the frequency response with 5\% droop control at bus~39, and the blue curve shows the frequency response when the droop controller is replaced with 80,000 EWHs with 4.5 kW power. The results show that 80,000 coordinated EWHs can provide a fast frequency response equivalent to a 1000 MW generation unit with a 5\% droop coefficient.   
\begin{figure}
    \centering
    \includegraphics[width=1\columnwidth]{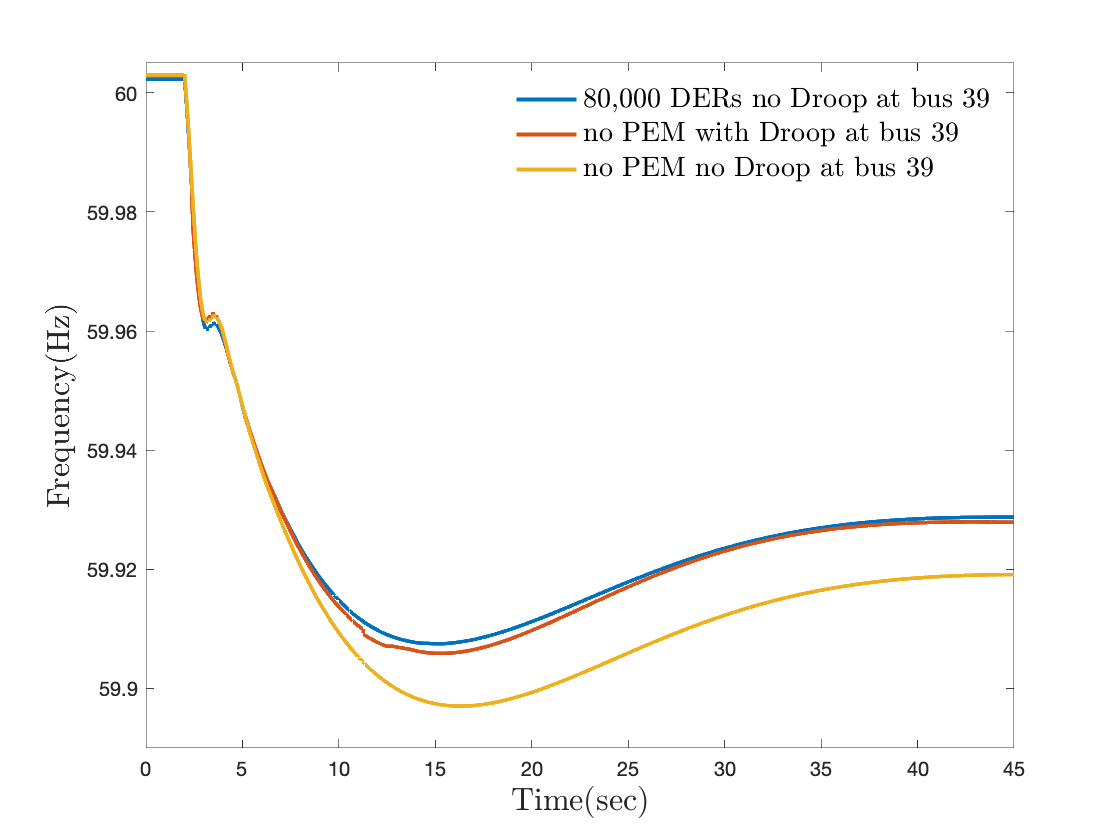}
    \caption{Frequency response, with/without droop control at bus 39 and with/without DER coordination.}
    \label{fig:droop}
\end{figure}
\section{conclusion}\label{sec:conclusion}
A fully decentralized frequency-based DER controller is designed and an analysis is presented that enables a DER aggregator to precisely estimate the synthetic damping available (online) from a fleet of aggregated DERs. To understand the impact of participating in ancillary services (i.e., frequency regulation) while also guaranteeing available synthetic damping from the fleet, a comprehensive analysis is provided to characterize a probabilistic lower bound on the available synthetic damping. This bound enables trade-off analysis between the fleet's ability to provide frequency regulation versus synthetic damping. Finally, practical considerations of the proposed decentralized control scheme are presented in a simulation-based study concerning the effects of actuation delays and frequency measurement resolutions. Future research directions include adapting the decentralized controller parameters based on spatial grid information to differentiate and prioritize certain locations/buses/feeders, as a way to incorporate the proposed synthetic damping with existing under-frequency load-shedding (UFLS) schemes. Another venue of interest represents the development of market mechanisms for incentivizing and valuing synthetic damping in low-inertia power systems.

\IEEEtriggeratref{22}
\bibliographystyle{IEEEtran}
\bibliography{ref2}

\end{document}